\documentclass[cmp,final,envcountsect,envcountsame]{svjour}

\usepackage{amsmath,amssymb}
\usepackage{amsthm}
\usepackage{amsfonts}
\usepackage{color}
\usepackage{mathtools}
\usepackage{enumerate}
\usepackage{overpic}
\usepackage{bbm}

\usepackage[ colorlinks = true,
             linkcolor = blue,
             urlcolor  = blue,
             citecolor = red,
             anchorcolor = green,
]{hyperref}


\DeclareMathOperator{\tr}{Tr}
\DeclareMathOperator{\supp}{supp}
\DeclareMathOperator{\poly}{poly}

\newcommand{\cI}{\mathcal{I}}
\newcommand{\cR}{\mathcal{R}}
\newcommand{\cC}{\mathcal{C}}
\newcommand{\cS}{\mathcal{S}}
\newcommand{\cT}{\mathcal{T}}

\newcommand{\cM}{\mathcal{M}}

\newcommand{\cH}{\mathcal{H}}

\newcommand{\bN}{\mathbb{N}}
\newcommand{\eps}{\varepsilon}




\begin{document}

\title{On Composite Quantum Hypothesis Testing}

\author{Mario Berta \inst{1,2,3} \and Fernando G.~S.~L.~Brand\~ao \inst{2,3} \and Christoph Hirche \inst{4}}
\institute{Department of Computing, Imperial College London, London SW7 2AZ, UK \and
IQIM, California Institute of Technology, Pasadena, CA 91125, USA \and
AWS Center for Quantum Computing, Pasadena, CA 91125, USA \and 
QMATH, Department of Mathematical Sciences, University of Copenhagen, Denmark}



\maketitle

\begin{abstract}
We extend quantum Stein's lemma in asymmetric quantum hypothesis testing to composite null and alternative hypotheses. As our main result, we show that the asymptotic error exponent for testing convex combinations of quantum states $\rho^{\otimes n}$ against convex combinations of quantum states $\sigma^{\otimes n}$ can be written as a regularized quantum relative entropy formula. We prove that in general such a regularization is needed but also discuss various settings where our formula as well as extensions thereof become single-letter. This includes an operational interpretation of the relative entropy of coherence in terms of hypothesis testing. For our proof, we start from the composite Stein's lemma for classical probability distributions and lift the result to the non-commutative setting by using elementary properties of quantum entropy. Finally, our findings also imply an improved recoverability lower bound on the conditional quantum mutual information in terms of the regularized quantum relative entropy\,---\,featuring an explicit and universal recovery map.
\end{abstract}


\section{Overview of results}\label{sec:intro}
 
Hypothesis testing is arguably one of the most fundamental primitives in quantum information theory. As such it has found many applications, e.g., in quantum channel coding~\cite{HN03} and quantum illumination~\cite{Lloyd2008,Tan2008,Wilde17}, or for giving an operational interpretation to abstract quantities~\cite{BP10,HT14,CHMOSWW16}. A particular hypothesis testing setting is that of quantum state discrimination where quantum states are assigned to each of the hypotheses and we aim to determine which state is actually given. Several distinct scenarios are of interest, which differ in the priority given to different types of error or in how many copies of a system are given to aid the discrimination. Here, we investigate the setting of asymmetric hypothesis testing where the goal is to discriminate between two $n$-party quantum states (strategies or hypotheses) $\rho_n$ and $\sigma_n$ living on the $n$-fold tensor product of some finite-dimensional inner product space $\cH^{\otimes n}$. That is, we are optimizing over all two-outcome positive operator valued measures (POVMs) with $\{M_n,(1-M_n)\}$ and associate $M_n$ with accepting $\rho_n$ as well as $\left(1-M_n\right)$ with accepting $\sigma_n$. This naturally gives rise to the two possible errors
\begin{align}\label{eq:errorsI}
&\alpha_n(M_n):=\tr\big[\rho_n(1-M_n)\big]\; &\text{Type 1 error,} \\ 
&\beta_n(M_n):= \tr\big[\sigma_nM_n\big]\; &\text{Type 2 error.}\label{eq:errorsII}
\end{align}
For asymmetric hypothesis testing we minimize the Type 2 error as\footnote{Here and henceforth $\ll$ denotes the Loewner order.}
\begin{align}
\beta(n,\eps):=\inf_{0\ll M_n\ll1}\Big\{\beta_n(M_n)\big|\alpha_n(M_n)\leq\eps\Big\}
\end{align}
while we require the Type 1 error not to exceed a small constant $\eps\in(0,1)$. We are then interested in finding the optimal error exponent\footnote{Here and henceforth the logarithm is defined with respect to the basis 2.}
\begin{align}
\zeta(n,\eps):=-\frac{\log\beta(n,\eps)}{n},
\end{align}
and its asymptotic limits
\begin{align}\label{Eq:SteinsExponent}
\zeta(\infty,\eps):=\lim_{n\to\infty}-\frac{\log\beta(n,\eps)}{n},\quad \zeta(\infty,0):=\lim_{\eps\to0}\zeta(\infty,\eps)\,.
\end{align}

A well studied discrimination setting is that between fixed independent and identical (iid) states $\rho^{\otimes n}$ and $\sigma^{\otimes n}$, where the asymptotic error exponent is determined by the quantum Stein's lemma~\cite{HP91,ON00,Audenaert2008} in terms of the quantum relative entropy. Namely, we denote this special case of Equation \eqref{Eq:SteinsExponent} by $\zeta_{\rho,\sigma}(\infty,\eps)$ and the Stein's lemma then gives for any $\eps\in(0,1)$ the formula
\begin{align}\label{eq:brandao_old}
\zeta_{\rho,\sigma}(\infty,\eps)= D(\rho\|\sigma):=\begin{cases} \tr\big[\rho\left(\log\rho-\log\sigma\right)\big] \quad & \supp(\rho)\subseteq\supp(\sigma)\\ \infty & \text{otherwise.}\end{cases}
\end{align}
In many applications we aim to solve more general discrimination problems and a prominent example of such is that of composite hypotheses\,---\,in which we attempt to discriminate between different sets of states. Previously the case of composite iid null hypotheses $\rho^{\otimes n}$ with $\rho\in\cS$ and corresponding asymptotic error exponent $\zeta_{\cS,\sigma}(\infty,\eps)$ was studied in~\cite{hayashi2002,BDKSSS05}, leading to the formula
\begin{align}
\zeta_{\cS,\sigma}(\infty,\eps)=\inf_{\rho\in S} D(\rho\|\sigma)\quad\forall\eps\in(0,1)\,.
\end{align}
On the other hand, the problem of composite alternative hypotheses is more involved in the non-commutative case. When the set of alternative hypotheses $\cT_n$ for $n\in\mathbb{N}$ fulfils certain axioms motivated by the framework of resource theories, it was shown in~\cite{BP10} that the corresponding asymptotic error exponent $\zeta_{\rho,\cT}(\infty,\eps)$ is written in terms of the regularized relative entropy distance as
\begin{align}
\zeta_{\rho,\cT}(\infty,\eps)=\lim_{n\to\infty}\frac{1}{n}\inf_{\sigma_n\in\cT_n} D\left(\rho^{\otimes n}\|\sigma_n\right)\quad\forall\eps\in(0,1)\,.
\end{align}
This regularization is in general needed as we know from the case of the relative entropy of entanglement~\cite{Vollbrecht01}. Note that this might not be too surprising since the set of alternative hypotheses is not required to be iid in general.

For our main result, we consider the setting where null and alternative hypotheses are both composite and given by convex combinations of $n$-fold tensor powers of states from given convex, closed sets $\cS$ and $\cT$. 
More precisely, for $n\in\mathbb{N}$ we attempt the following discrimination problem.\footnote{Here and henceforth all inner product spaces $\cH$ are finite-dimensional and $S(\cH)$ denotes the set of unit trace positive semi-definite linear operators on $\cH$.}

\begin{description}
\item[{\bf Null hypothesis}:] the convex hull of iid states
\begin{align}
\text{$\cS_n:=\Big\{\int\rho^{\otimes n}\;\mathrm{d}\nu(\rho)\Big|\nu\in\cS\Big\}$ with $\cS\subseteq S(\cH)$ convex and closed}
\end{align}
\item[{\bf Alternative hypothesis}:] the convex hull of iid states
\begin{align}
\text{$\cT_n:=\Big\{\int\sigma^{\otimes n}\;\mathrm{d}\mu(\sigma)\Big|\mu\in\cT\Big\}$ with $\cT\subseteq S(\cH)$ convex and closed}
\end{align}
\end{description}

Slightly abusing the notation, $\nu\in\cS$ and $\mu\in\cT$ stand for probability measures on the Borel $\sigma$-algebra of $\cS$ and $\cT$, respectively. For $\eps\in(0,1)$ the goal is to determine the optimal error exponent for composite asymmetric hypothesis testing
\begin{align}
\zeta_{\cS,\cT}(n,\eps):=-\frac{1}{n}\log\inf_{0\ll M_n\ll1}\left\{\sup_{\mu\in\cT}\tr\big[M_n\sigma_n(\mu)\big]\middle|\sup_{\nu\in\cS}\tr\big[(1-M_n)\rho_n(\nu)\big]\leq\eps\right\}
\end{align}
with the abbreviations
\begin{align}
\rho_n(\nu):=\int\rho^{\otimes n}\mathrm{d}\nu(\rho)\quad\text{and}\quad\sigma_n(\mu):=\int\sigma^{\otimes n}\mathrm{d}\mu(\sigma)\,.
\end{align}
It is trivial to see that $\zeta_{\cS,\cT}(n,\eps)$ equivalently gives the error exponent of testing between $\cS^{\otimes n} := \{ \rho^{\otimes n} | \rho\in\cS \}$ and $\cT^{\otimes n} := \{ \sigma^{\otimes n} | \sigma\in\cT \}$. This then explicitly takes the form of an iid problem. The following is our main result, which we prove in Section~\ref{sec:mainresult} under the support condition
\begin{align}
\mathrm{supp}(\rho) \subseteq \mathrm{supp}(\sigma)\quad\forall\rho\in\cS\quad\forall\sigma\in\cT\,.
\end{align}

\begin{theorem}\label{thm:main}
For the discrimination problem as above, we have
\begin{align}\label{eq:liminf}
\lim_{\eps\to0}\liminf_{n\to\infty}\zeta_{\cS,\cT}(n,\eps)&=\lim_{\eps\to0}\limsup_{n\to\infty}\zeta_{\cS,\cT}(n,\eps) \\
&=\lim_{n\to\infty}\frac{1}{n}\inf_{\substack{\rho\in\cS\\\mu\in\cT}}D\Big(\rho^{\otimes n}\Big\|\int\sigma^{\otimes n}\;\mathrm{d}\mu(\sigma)\Big) \label{eq:Main}
\end{align}
\end{theorem}


Our proof can be found in Section~\ref{sec:mainresult} and has a clear structure in the sense that we start from the composite Stein's lemma for classical probability distributions and then lift the result to the non-commutative setting by using elementary properties of entropic measures. We emphasise that even in the case of a fixed null hypothesis $\cS=\{\rho\}$ our setting is not a special case of the previous results~\cite{BP10}, as our sets of alternative hypotheses are not closed under tensor product
\begin{align}
\sigma_m\in\cT_m,\;\sigma_n\in\cT_n\nRightarrow\sigma_m\otimes\sigma_n\in\cT_{mn}\,,
\end{align}
which is one of the properties required for the results in~\cite{BP10}.

We show that in contrast to the finite classical case \cite{LM02,BHLP14}, the regularization in Equation \eqref{eq:Main} is needed in general. That is, we provide an explicit example for which the non-regularized relative entropy formula is not an achievable asymptotic error exponent
\begin{align}\label{eq:reg_needed}
\inf_{\substack{\rho\in\cS\\\sigma\in\cT}}D(\rho\|\sigma)>\lim_{\eps\to0}\limsup_{n\to\infty}\zeta_{\cS,\cT}(n,\eps)\,.
\end{align}
In particular, we find that, even for $n\to\infty$, in general
\begin{align}\label{Eq:ConjMeasure}
\frac{1}{n}\inf_{\mu\in\cT}D\Big(\rho^{\otimes n}\Big\|\int\sigma^{\otimes n}\;\mathrm{d}\mu(\sigma)\Big)\neq\inf_{\sigma\in\cT}D(\rho\|\sigma)\,,
\end{align}
thereby providing a counterexample to this conjectured quantum entropy inequality (see \cite[Equation (20)]{BHOS15} for a variant) which holds in the finite classical setting (see, e.g., \cite[Lemma 3.11]{STH16}).\footnote{After completion of the first version of our work, even simpler examples of composite hypothesis testing problems with no single-letter solution were provided in \cite{mosonyi2020error}.} Note that the $\leq$ direction in Equation \eqref{Eq:ConjMeasure} holds trivially.

Nevertheless, there exist non-commutative cases in which the regularization in Equation \eqref{eq:Main} is not needed and we discuss several such examples. In particular, we give an operational interpretation of the relative entropy of coherence in terms of hypothesis testing.

Finally, we apply the techniques developed in this work to strengthen previously known quantum relative entropy lower bounds on the conditional quantum mutual information \cite{FR14,BHOS15,TB15,wilde15,STH16,JRSWW15,SBT16}
\begin{align}
I(A:B|C)_\rho:=H(AC)_\rho+H(BC)_\rho-H(ABC)_\rho-H(C)_\rho
\end{align}
with $H(C)_\rho:=-\tr\left[\rho_C\log\rho_C\right]$ the von Neumann entropy. We find that
\begin{align}\label{eq:cqmi_lowerbound}
I(A:B|C)_\rho\geq\limsup_{n\to\infty}\frac{1}{n}D\Big(\rho_{ABC}^{\otimes n}\Big\|\int\beta_0(t)\;\mathrm{d}t\big(\cI_A\otimes\cR^{[t]}_{C\to BC}(\rho_{AC})\big)^{\otimes n}\Big)
\end{align}
for some universal probability distribution $\beta_0(t)$ and the rotated Petz recovery maps $R^{[t]}_{C\to BC}$ as defined in Section \ref{sec:recovery}. In contrast to the previously known bounds in terms of the quantum relative entropy \cite{STH16,BHOS15}, the recovery map in Equation \eqref{eq:cqmi_lowerbound} takes a specific form only depending on the reduced state on $BC$. Note that the regularization in Equation \eqref{eq:cqmi_lowerbound} cannot go away in relative entropy distance, as recently shown in~\cite{FF17}. We end with an overview how all known recoverability lower bounds on the conditional quantum mutual information compare and argue that Equation \eqref{eq:cqmi_lowerbound} represents the last possible strengthening.

The remainder of the paper is structured as follows. In Section~\ref{sec:mainresult} we prove our main result about composite asymmetric hypothesis testing. This is followed by Section~\ref{sec:examples} where we discuss several concrete examples including an operational interpretation of the relative entropy of coherence, as well as its R\'enyi analogues in terms of the Petz divergences~\cite{petz_statbook} and the sandwiched relative entropies~\cite{muller2013quantum,WWY14}. In Section~\ref{sec:recovery} we prove the refined lower bound on the conditional mutual information from Equation \eqref{eq:cqmi_lowerbound} and use it to show that the regularization in Equation \eqref{eq:Main} is needed in general. Finally, we end in Section~\ref{sec:discussion} with a discussion of some open questions.

\section{Proof of main result}\label{sec:mainresult}

In the following we give a proof of our main result Theorem~\ref{thm:main}. We first prove the converse, meaning the $\leq$ direction of Theorem~\ref{thm:main}, which follows from the following proposition.

\begin{proposition}\label{lem:converse_lemma}
For $\rho\in\cS$, $\mu\in\cT$, and $\eps\in(0,1)$ we have
\begin{align}
\zeta_{\cS,\cT}(n,\eps) \leq \inf_{\substack{\rho\in\cS\\\mu\in\cT}} \frac{1}{n}\frac{D\left(\rho^{\otimes n}\middle\|\sigma_n(\mu)\right)+1}{1-\eps}\,.\label{eq:main-converse2}
\end{align}
\end{proposition}
\begin{proof}
We follow the original converse proof of quantum Stein's lemma~\cite{HP91} for the states $\rho^{\otimes n}$ and $\sigma_n(\mu)$. By the monotonicity of the quantum relative entropy~\cite{lindblad75}, we have for the measurement $\{M_n,(1-M_n)\}$ that
\begin{align}
&D\left(\rho^{\otimes n}\middle\|\sigma_n(\mu)\right)\nonumber\\
&\geq \tr\left[M_n\rho^{\otimes n}\right]\log\frac{\tr\left[M_n\rho^{\otimes n}\right]}{\tr\left[M_n\sigma_n(\mu)\right]}+\left(1-\tr\left[M_n\rho^{\otimes n}\right]\right)\log\frac{1-\tr\left[M_n\rho^{\otimes n}\right]}{1-\tr\left[M_n\sigma_n(\mu)\right]}\\ 
&\geq-\log2-\tr \left[M_n\rho^{\otimes n}\right]\log\tr\left[M_n\sigma_n(\mu)\right]\\
&\geq-1-\inf_{\rho\in\cS}\tr\left[M_n\rho^{\otimes n}\right]\log\sup_{\mu\in\cT}\tr\left[M_n\sigma_n(\mu)\right]\\
&\geq-1-(1-\eps)\log\sup_{\mu\in\cT}\tr\left[M_n\sigma_n(\mu)\right]
\end{align}
leading to 
\begin{align}
-\frac1n\log\sup_{\mu\in\cT}\tr\left[M_n\sigma_n(\mu)\right]\leq \frac1n \frac{D\left(\rho^{\otimes n}\middle\|\sigma_n(\mu)\right) +1}{1-\eps}
\end{align}
for any $\rho\in\cS$, $\mu\in\cT$, and $0\ll M_n \ll1$ such that $\sup_{\rho\in\cS}\tr\left[(1-M_n)\rho^{\otimes n}\right]\leq\eps$. Taking the supremum over all such $M_n$ and then the infimum over $\rho\in\cS$ and $\mu\in\cT$ leads to the desired result. 
\end{proof}

By taking the appropriate limits in Proposition \ref{lem:converse_lemma}, we immediately find the converse statements
\begin{align}
\lim_{\eps\rightarrow 0} \limsup_{n\rightarrow\infty} \zeta_{\cS,\cT}(n,\eps) &\leq \limsup_{n\rightarrow\infty} \frac{1}{n} \inf_{\substack{\rho\in\cS\\\mu\in\cT}} D\left(\rho^{\otimes n}\middle\|\sigma_n(\mu)\right) \label{Eq:ConverseLimSup}\\
\lim_{\eps\rightarrow 0} \liminf_{n\rightarrow\infty} \zeta_{\cS,\cT}(n,\eps) &\leq \liminf_{n\rightarrow\infty} \frac{1}{n} \inf_{\substack{\rho\in\cS\\\mu\in\cT}} D\left(\rho^{\otimes n}\middle\|\sigma_n(\mu)\right)\, .\label{Eq:ConverseLimInf}
\end{align}

\begin{remark}
In information-theoretic language Equation~\eqref{eq:main-converse2} represents a weak converse, i.e. the limit $\eps\rightarrow 0$ in Equations~\eqref{Eq:ConverseLimSup} and~\eqref{Eq:ConverseLimInf} is required,  and one might be tempted to derive a strong converse that holds for all $\eps\in(0,1)$ by employing quantum versions of the R\'enyi relative entropies \cite{petz_statbook,muller2013quantum,WWY14} or the smooth max-relative entropy \cite{jain02,Datta09-2}. However, because of the missing $\sigma_m\in\cT_m,\;\sigma_n\in\cT_n\nRightarrow\sigma_m\otimes\sigma_n\in\cT_{mn}$ property, the convergence of aforementioned measures to the quantum relative entropy remains unclear (see, e.g., \cite{Verstraete12,BP10,Datta09,Datta13} for corresponding techniques in the context of quantum hypothesis testing). As such, we leave open the question about a strong converse.
\end{remark}

For the proof of the achievability, meaning the $\geq$ direction in Theorem \ref{thm:main}, the basic idea is to start from the corresponding composite Stein's lemma for classical probability distributions and lift the result to the non-commutative setting by solely using properties of quantum entropy. For that we need the measured relative entropy defined as~\cite{donald86,HP91}
\begin{align}\label{eq:measured}
D_{\mathcal{M}}(\rho\|\sigma):=\sup_{(\mathcal{X},\mathcal{M})}D\Big(\underbrace{\sum_{x\in\mathcal{X}}\tr\left[M_x\rho\right]|x\rangle\langle x|}_{=\,\mathcal{M}(\rho)}\Big\|\underbrace{\sum_{x\in\mathcal{X}}\tr\left[M_x\sigma\right]|x\rangle\langle x|}_{=\,\mathcal{M}(\sigma)}\Big)\,,
\end{align}
where the optimization is over finite sets $\mathcal{X}$ and measurements $\mathcal{M}$ on $\mathcal{X}$ with $\tr\left[M_x\rho\right]$ a measure on $\mathcal{X}$. Henceforth, we write for the classical relative entropy between probability distributions $D(P\|Q)$\,---\,defined via the diagonal embedding of $P$ and $Q$ as on the right-hand side of Equation \eqref{eq:measured}. It is known that we can restrict the a priori unbounded supremum to rank-one projective measurements \cite[Theorem 2]{BFT15}. We now prove the achievability direction in Theorem \ref{thm:main} in several steps and start with an achievability bound in terms of the measured relative entropy.


\begin{lemma}\label{MRelEnt}
For definitions as above and $\eps\in(0,1)$, we have
\begin{align}
\liminf_{n\to\infty}\zeta_{\cS,\cT}(n,\eps)\geq\sup_{k\in\bN}\frac{1}{k}\inf_{\substack{\nu\in\cS\\\mu\in\cT}}D_{\mathcal{M}}\left(\rho_k(\nu)\middle\|\sigma_k(\mu)\right)\,.
\end{align}
\end{lemma}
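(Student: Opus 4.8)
The plan is to reduce the composite quantum testing problem to a composite \emph{classical} testing problem by measuring, and then invoke the composite Stein's lemma for classical probability distributions. First I would fix, for each $n$, a POVM $\{N_n, \id - N_n\}$ attaining (or nearly attaining) the optimum in the classical composite Stein's lemma applied to the families of distributions obtained by measuring $\rho_n(\nu)$ and $\sigma_n(\mu)$. The subtlety is that the optimal measurement should be chosen \emph{uniformly} over the alternative hypothesis family, so the natural object to use is the pair of convex sets $\{P_\nu\}$, $\{Q_\mu\}$ of outcome distributions induced by a single fixed measurement; one then applies the classical composite result (Stein's lemma against a convex, compact set of alternative hypotheses, as in~\cite{hayashi2002,BDKSSS05} or its composite-null version) to obtain a test on the classical outcomes whose type-1 error is at most $\eps$ uniformly over $\nu \in \cS$ and whose type-2 error decays with exponent $\inf_{\nu,\mu} D(P_\nu \| Q_\mu)$.

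The key steps, in order, are: (i) given a candidate rank-one projective measurement $\cX \mapsto \{M_x\}$ on $\cH^{\otimes n}$, push forward the sets $\cS_n$ and $\cT_n$ to sets of probability distributions on $\cX$ and observe these are again convex; (ii) apply the classical composite Stein's lemma to these pushed-forward families to get a classical accept/reject region $A \subseteq \cX$ with $\sup_{\nu} P_\nu(A^c) \le \eps$ and $\sup_\mu Q_\mu(A) \le 2^{-n(\inf_{\nu,\mu} D(P_\nu\|Q_\mu) - o(n))}$; (iii) pull this back to the quantum test $M_n := \sum_{x \in A} M_x$, which is a valid POVM element satisfying $\tr[(1-M_n)\rho_n(\nu)] = P_\nu(A^c) \le \eps$ and $\tr[M_n \sigma_n(\mu)] = Q_\mu(A) \le 2^{-n(\ldots)}$; (iv) take the supremum of the resulting exponent over all choices of measurement, which by definition of $D_{\mathcal{M}}$ (and the reduction to rank-one projective measurements from~\cite[Thm.~2]{BFT15}) yields exactly $\frac1n \inf_{\nu,\mu} D_{\mathcal{M}}(\rho_n(\nu) \| \sigma_n(\mu))$. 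A small point to handle carefully is that a single fixed measurement must serve \emph{both} the null and the alternative families; this is automatic since any measurement produces outcome distributions for every state, and the classical composite Stein's lemma only requires the two convex families of distributions on a common outcome alphabet.

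The main obstacle I anticipate is the interchange of the supremum over measurements with the infimum over $\nu$ and $\mu$: a priori the measurement that is good for one pair $(\nu,\mu)$ need not be good for another, so one cannot naively conclude that a single measurement achieves $\inf_{\nu,\mu} D_{\mathcal{M}}$. The cleanest way around this is to run the argument for an arbitrary but fixed measurement, obtaining the bound $\zeta^n_{\cS,\cT}(\eps) \ge \frac1n \inf_{\nu,\mu} D(P^{(M)}_\nu \| Q^{(M)}_\mu)$ for that measurement $M$, and only \emph{then} take the supremum over $M$ on the right-hand side — using that $\zeta^n_{\cS,\cT}(\eps)$ itself does not depend on $M$. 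Since $\sup_M \inf_{\nu,\mu} D(P^{(M)}_\nu\|Q^{(M)}_\mu) \ge \inf_{\nu,\mu} \sup_M D(P^{(M)}_\nu\|Q^{(M)}_\mu) = \inf_{\nu,\mu} D_{\mathcal{M}}(\rho_n(\nu)\|\sigma_n(\mu))$ would be the wrong direction, one instead needs a minimax argument: the function $(M,(\nu,\mu)) \mapsto D(P^{(M)}_\nu\|Q^{(M)}_\mu)$ is convex in $(\nu,\mu)$ (joint convexity of relative entropy) and one can appeal to compactness of the measurement set together with a Sion-type minimax theorem, or alternatively argue that the classical composite Stein's lemma already bakes in the optimal measurement, so that the achievable exponent for the fixed families $\cS_n, \cT_n$ is $\sup_M \inf_{\nu,\mu} D(P^{(M)}_\nu\|Q^{(M)}_\mu)$, which is by definition $\ge \inf_{\nu,\mu} D_{\mathcal{M}}(\rho_n(\nu)\|\sigma_n(\mu))$ — wait, this needs the swap in the favorable direction, which does hold because we are lower-bounding: for \emph{each} fixed $(\nu,\mu)$ the best measurement gives $D_{\mathcal{M}}$, and a diagonal/compactness argument produces a single near-optimal measurement for the whole compact family. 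I expect the write-up to spend most of its effort making this last interchange rigorous, with everything else being a routine translation between the quantum POVM and its classical image.
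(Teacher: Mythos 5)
Your proposal is correct and follows essentially the same route as the paper: measure with a fixed POVM, invoke the classical composite Stein's lemma for the induced families of outcome distributions, and then resolve the $\sup_{\mathcal{M}_n}$--$\inf_{\nu,\mu}$ interchange via a Sion-type minimax theorem for the measured relative entropy (the paper cites exactly this as Lem.~\ref{applySion}, i.e.~\cite[Lem.~20]{BHLP14}). The obstacle you flag at the end is the real one, and your proposed resolution is the one the paper uses.
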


\begin{proof}
For sets of classical probability distributions $\cS$ and $\cT$, we get from the corresponding commutative achievability result that for $\delta>0$ and $\eps\in(0,1)$, there exists $M_{\eps,\delta}\in\bN$ such that for $m\geq M_{\eps,\delta}$ we have
\begin{align}\label{eq:classical}
\zeta_{\cS,\cT}(m,\eps)\geq\inf_{\substack{P\in\cS\\Q\in\cT}}D(P\|Q)-\delta\,.
\end{align}
This is a special case of \cite[Theorem 2]{BHLP14} and we refer to \cite{LM02} as well as references therein for a general discussion of composite hypothesis testing. Now, the strategy is to first measure the quantum states and then to invoke the classical achievability result from Equation~\eqref{eq:classical} for the resulting probability distributions.

This argument is made precise as follows. The classical case implies the existence of a sequence of tests $(T_{k, m})_{m\in\bN}$ for the discrimination problem between the measured state $\cM_k(\cS^{\otimes k})^{\otimes m}$ and the measured state $\cM_k(\cT^{\otimes k})^{\otimes m}$ with $m\in\bN$, such that
\begin{align}
\sup_{\rho\in\cS} \tr\left[(1 - T_{k, m}) \cM_k( \rho^{\otimes k})^{\otimes m}\right] \leq \eps
\end{align}
for all $m\in\bN$, and
\begin{align}
\lim_{m\to\infty} - \frac1m \log\sup_{\sigma\in\cT} \tr\left[T_{k, m} \cM_k(\sigma^{\otimes k})^{\otimes m}\right]\geq \inf_{\substack{\rho\in\cS\\ \sigma\in\cT}} D\left(\cM_k(\rho^{\otimes k})\middle\| \cM_k(\sigma^{\otimes k}\right)\,.
\end{align}
Hence, for any $\delta>0$, there exists an $m_\delta$ such that for all $m\geq m_\delta$ we have
\begin{align}
 - \frac1m \log\sup_{\sigma\in\cT} \tr\left[T_{k, m} \cM_k(\sigma^{\otimes k})^{\otimes m}\right] \geq \inf_{\substack{\rho\in\cS\\ \sigma\in\cT}} D\left(\cM_k(\rho^{\otimes k}) \middle\| \cM_k(\sigma^{\otimes k}\right) - \delta\,.
\end{align}
Defining $T_n := \big(\cM_k^\dagger\big)^{\otimes m}(T_{k,m})\otimes 1_r$ for $n=km+r$, $r\in\{0,\dots, k-1\}$, we get that 
\begin{align}
\sup_{\rho\in\cS} \tr\left[(1- T_n)\rho^{\otimes n}\right] = \sup_{\rho\in\cS} \tr\left[(1 - T_{k,m}) \cM_k(\rho^{\otimes k})^{\otimes m}\right] \leq \eps
\end{align}
for all $n\in\bN$, and thus
\begin{align}
\zeta_{\cS,\cT}(n,\eps) &\geq -\frac1n \log\sup_{\sigma\in\cT} \tr\left[T_n\sigma^{\otimes n}\right]\\ 
&= -\frac1{km+r}\log\sup_{\sigma\in\cT} \tr\left[T_{k,m} \cM_k(\sigma^{\otimes k})^{\otimes m}\right]\\
&\geq \frac{m}{km+r} \inf_{\substack{\rho\in\cS\\\sigma\in\cT}} D\left(\cM_k(\rho^{\otimes k})\middle\| \cM_k(\sigma^{\otimes k})\right) - \frac{m}{km+r} \delta
\end{align}
whenever $n\geq km_\delta$. Therefore, we get
\begin{align}
\liminf_{n\to\infty} \zeta_{\cS,\cT}(n,\eps) \geq \frac{1}{k} \inf_{\substack{\rho\in\cS\\\sigma\in\cT}} D\left(\cM_k(\rho^{\otimes k}) \middle\| \cM_k(\sigma^{\otimes k})\right) - \frac{1}{k} \delta
\end{align}
for any binary POVM $\cM_k$ and $\delta>0$. Taking $\delta\rightarrow 0$ and then the supremum over $\cM_k$ gives
\begin{align}
\liminf_{n\to\infty} \zeta_{\cS,\cT}(n,\eps) &\geq \frac{1}{k} \sup_{\cM_k} \inf_{\substack{\rho\in\cS\\\sigma\in\cT}} D(\cM_k(\rho^{\otimes k}) \| \cM_k(\sigma^{\otimes k})) \\
&\geq \frac{1}{k} \sup_{\cM_k} \inf_{\substack{\nu\in\cS\\\mu\in\cT}} D(\cM_k(\rho_k(\nu)) \| \cM_k(\sigma_k(\mu))) \\ 
&= \frac{1}{k}  \inf_{\substack{\nu\in\cS\\\mu\in\cT}} \sup_{\cM_k} D(\cM_k(\rho_k(\nu)) \| \cM_k(\sigma_k(\mu)))\,,
\end{align}
where the equality follows from Lemma~\ref{applySion}. Since this holds for every $k\in\bN$, we find the claimed
\begin{align}
\liminf_{n\to\infty} \zeta_{\cS,\cT}(n,\eps) &\geq \sup_{k\in\bN} \frac{1}{k}  \inf_{\substack{\nu\in\cS\\\mu\in\cT}} D_\cM(\rho_k(\nu) \| \sigma_k(\mu))\,.
\end{align}
\end{proof}

Next, we argue that the measured relative entropy can in fact be replaced by the quantum relative entropy by only paying an asymptotically vanishing penalty term. For this we need the following lemma, which can be seen as a generalization of the technical argument in the original proof of quantum Stein's lemma~\cite{HP91}.

\begin{lemma}\label{pinching}
Let $\rho_n,\sigma_n\in S\left(\cH^{\otimes n}\right)$ with $\sigma_n$ permutation invariant. Then, we have
\begin{align}
D\big(\rho_n\big\|\sigma_n\big)-\log\poly(n)\leq D_{\mathcal{M}}\big(\rho_n\big\|\sigma_n\big)\leq D\big(\rho_n\big\|\sigma_n\big)\,,
\end{align}
where $\poly(n)$ stands for terms of order at most polynomial in $n$.
\end{lemma}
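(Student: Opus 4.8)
The plan is to use the \emph{pinching map} with respect to $\sigma_n$ and exploit the permutation invariance to control the number of distinct eigenvalues of $\sigma_n$. Recall that if $\sigma_n = \sum_{\lambda} \lambda\, P_\lambda$ is the spectral decomposition with distinct eigenvalues $\lambda$ and spectral projectors $P_\lambda$, the pinching map is $\cP_{\sigma_n}(X) := \sum_\lambda P_\lambda X P_\lambda$. The right-hand inequality $D_{\mathcal{M}}(\rho_n\|\sigma_n) \leq D(\rho_n\|\sigma_n)$ is immediate from monotonicity of the quantum relative entropy under the measurement channel, since $D_{\mathcal{M}}$ is a supremum of such measured quantities and each measurement is a (classical-output) quantum channel. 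So the content is the left-hand inequality.

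For the lower bound, first I would note that $\cP_{\sigma_n}$ commutes with $\sigma_n$, so that $\cP_{\sigma_n}(\sigma_n) = \sigma_n$, and that the measured relative entropy dominates the relative entropy after any pinching followed by the measurement in the eigenbasis of $\sigma_n$; concretely one has $D_{\mathcal{M}}(\rho_n\|\sigma_n) \geq D\big(\cP_{\sigma_n}(\rho_n)\,\big\|\,\sigma_n\big)$ because $\cP_{\sigma_n}(\rho_n)$ and $\sigma_n$ commute and can be read off by a common projective measurement, which is an admissible choice in the supremum defining $D_{\mathcal{M}}$. (Here one uses that $D_{\mathcal{M}}$ allows rank-one refinements of the eigenprojectors $P_\lambda$, as noted via~\cite[Thm.~2]{BFT15}.) It then remains to relate $D\big(\cP_{\sigma_n}(\rho_n)\|\sigma_n\big)$ to $D(\rho_n\|\sigma_n)$. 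The standard pinching inequality gives $\cP_{\sigma_n}(\rho_n) \geq \frac{1}{|\spec(\sigma_n)|}\,\rho_n$ as operators, where $|\spec(\sigma_n)|$ is the number of distinct eigenvalues of $\sigma_n$. Combining this with operator monotonicity of $\log$ and the definition of relative entropy (expanding $D(\cP_{\sigma_n}(\rho_n)\|\sigma_n) = -H(\cP_{\sigma_n}(\rho_n)) - \tr[\cP_{\sigma_n}(\rho_n)\log\sigma_n]$, using that $\tr[\cP_{\sigma_n}(\rho_n)\log\sigma_n] = \tr[\rho_n\log\sigma_n]$ since $\log\sigma_n$ is block-diagonal, and that pinching does not decrease entropy, $H(\cP_{\sigma_n}(\rho_n)) \geq H(\rho_n)$) yields
\begin{align}
D_{\mathcal{M}}(\rho_n\|\sigma_n) \geq D\big(\cP_{\sigma_n}(\rho_n)\big\|\sigma_n\big) \geq D(\rho_n\|\sigma_n) - \log|\spec(\sigma_n)|\,.
\end{align}
Actually the cleanest route is the one-line pinching bound $D(\rho_n\|\sigma_n) \leq D(\cP_{\sigma_n}(\rho_n)\|\sigma_n) + \log|\spec(\sigma_n)|$, which is classical; either way the error term is $\log|\spec(\sigma_n)|$.

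The final step, and the only place permutation invariance enters, is to bound $|\spec(\sigma_n)|$ polynomially in $n$. Since $\sigma_n$ is permutation invariant on $\cH^{\otimes n}$, by Schur--Weyl duality it is block-diagonal across the irreducible representations of the symmetric group $S_n$, and on each such isotypic component it acts as (identity on the multiplicity space) $\otimes$ (some operator on the irrep space). The number of irreps appearing is the number of Young diagrams with $n$ boxes and at most $d := \dim\cH$ rows, which is at most $(n+1)^d = \poly(n)$; and within each block the dimension of the relevant space is also polynomially bounded, so the total number of distinct eigenvalues is $\poly(n)$. I expect the main obstacle to be stating the Schur--Weyl bookkeeping at the right level of rigor without a lengthy detour — one wants to invoke the standard fact that a permutation-invariant state on $\cH^{\otimes n}$ has at most $\poly(n)$ distinct eigenvalues (this is exactly the ingredient used in the original Hayashi/Ogawa--Nagaoka proof of quantum Stein's lemma via the pinching, e.g.\ \cite[and references therein]{HP91}), and cite it rather than re-derive it. With $\log|\spec(\sigma_n)| \leq \log\poly(n)$, the displayed chain above gives $D(\rho_n\|\sigma_n) - \log\poly(n) \leq D_{\mathcal{M}}(\rho_n\|\sigma_n)$, completing the proof.
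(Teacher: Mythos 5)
Your overall route is exactly the paper's: the upper bound from data processing, and the lower bound via the pinching map $\cP_{\sigma_n}$, the operator inequality $\cP_{\sigma_n}(\rho_n)\gg\rho_n/|\spec(\sigma_n)|$, and a Schur--Weyl count showing $|\spec(\sigma_n)|\leq\poly(n)$ for permutation-invariant $\sigma_n$. The Schur--Weyl bookkeeping is also as in the paper ($\sigma_n=\bigoplus_{\lambda\in\Lambda_n}\sigma_{Q_\lambda}\otimes1_{P_\lambda}$ with both $|\Lambda_n|$ and $\dim Q_\lambda$ polynomial in $n$), and your observation that $D_{\mathcal{M}}(\rho_n\|\sigma_n)\geq D(\cP_{\sigma_n}(\rho_n)\|\sigma_n)$ because the two commuting states can be read off by a common projective measurement is the step the paper states (as an equality) as well.

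There is, however, one step in your parenthetical derivation that points in the wrong direction. You expand $D(\cP_{\sigma_n}(\rho_n)\|\sigma_n)=-H(\cP_{\sigma_n}(\rho_n))-\tr[\rho_n\log\sigma_n]$ and then invoke ``pinching does not decrease entropy, $H(\cP_{\sigma_n}(\rho_n))\geq H(\rho_n)$.'' That inequality, fed into this expansion, yields $D(\cP_{\sigma_n}(\rho_n)\|\sigma_n)\leq D(\rho_n\|\sigma_n)$ --- which is just data processing and gives you nothing. What you need is the reverse control $H(\cP_{\sigma_n}(\rho_n))\leq H(\rho_n)+\log|\spec(\sigma_n)|$, and this is where the pinching operator inequality and operator monotonicity of the logarithm actually enter: from $\cP_{\sigma_n}(\rho_n)\gg\rho_n/|\spec(\sigma_n)|$ one gets $\log\cP_{\sigma_n}(\rho_n)\gg\log\rho_n-\log|\spec(\sigma_n)|\cdot\id$, and since $\tr[\cP_{\sigma_n}(\rho_n)\log\cP_{\sigma_n}(\rho_n)]=\tr[\rho_n\log\cP_{\sigma_n}(\rho_n)]$ (the pinching is self-adjoint and fixes $\log\cP_{\sigma_n}(\rho_n)$), tracing against $\rho_n$ gives $-H(\cP_{\sigma_n}(\rho_n))\geq-H(\rho_n)-\log|\spec(\sigma_n)|$, hence $D(\cP_{\sigma_n}(\rho_n)\|\sigma_n)\geq D(\rho_n\|\sigma_n)-\log|\spec(\sigma_n)|$. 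Your fallback sentence citing the ``one-line pinching bound'' (the paper cites \cite[Lem.~4.4]{tomamichel2015quantum} for exactly this) is the correct move, so the proof is salvageable as written; just delete the appeal to $H(\cP_{\sigma_n}(\rho_n))\geq H(\rho_n)$, which is true but useless here.
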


\begin{proof}
We can restrict ourselves to the case where $\supp\big(\rho_n\big)\subseteq\supp\big(\sigma_n\big)$ since otherwise all relative entropy terms evaluate to infinity by definition. The second inequality follows directly from the definition of the measured relative entropy in Equation \eqref{eq:measured} together with the fact that the quantum relative entropy is monotone \cite{lindblad75}. We now prove the first inequality with the help of asymptotic spectral pinching \cite{hayashi2002}. The pinching map with respect to $\omega\in S(\cH)$ is defined as
\begin{align}
\mathcal{P}_\omega(\cdot):=\sum_{\lambda\in\mathrm{spec}(\omega)} P_\lambda (\cdot)P_\lambda\;\text{with the spectral decomposition $\omega=\sum_{\lambda\in\mathrm{spec}(\omega)}\lambda P_\lambda$.}
\end{align}
Crucially, we have the pinching operator inequality \cite{hayashi2002}
\begin{align}
\mathcal{P}_\omega[X]\gg\frac{X}{|\mathrm{spec}(\omega)|}\,,
\end{align}
where $|\mathrm{spec}(\cdot)|$ denotes the size of the spectrum. From this we can deduce that (see, e.g., \cite[Lemma 4.4]{tomamichel2015quantum})
\begin{align}
D\big(\rho_n\big\|\sigma_n\big)-\log\big|\mathrm{spec}\big(\sigma_n\big)\big|\leq D\big(\mathcal{P}_{\sigma_n}\big(\rho_n\big)\big\|\sigma_n\big) \leq D_{\mathcal{M}}\big(\rho_n\big\|\sigma_n\big)\,,
\end{align}
where the second inequality follows since $\mathcal{P}_{\sigma_n}\big(\rho_n\big)$ and $\sigma_n$ are diagonal in the same basis and the measured relative entropy gives an upper-bound. It remains to show that $\big|\mathrm{spec}\big(\sigma_n\big)\big|\leq\poly(n)$. However, since $\sigma_n$ is permutation invariant we have by Schur-Weyl duality (see, e.g., \cite[Section 5]{harrow_phd}) that in the Schur basis
\begin{align}
\sigma_n=\bigoplus_{\lambda\in\Lambda_n}\sigma_{Q_\lambda}\otimes1_{P_\lambda}\quad\text{with $|\Lambda_n|\leq\poly(n)$ and $\text{dim} \left[\sigma_{Q_\lambda}^0\right]\leq\poly(n)$.}
\end{align}
where $\sigma_{Q_\lambda}^0$ is the projector onto the support of $\sigma_{Q_\lambda}$. This implies the claim.
\end{proof}

By combining Lemma \ref{MRelEnt} together with Lemma \ref{pinching} we find for $\eps\in(0,1)$ that
\begin{align}\label{eq:intermediate}
\liminf_{n\to\infty}\zeta_{\cS,\cT}(n,\eps)\geq\limsup_{n\to\infty}\frac{1}{n}\inf_{\substack{\nu\in\cS\\\mu\in\cT}}D\left(\rho_n(\nu)\|\sigma_n(\mu)\right)\,.
\end{align}
The next step is to argue that asymptotically the infimum over states $\rho_n(\nu)$ can without loss of generality be restricted to iid states $\rho^{\otimes n}$ with $\rho\in\cS$.

\begin{lemma}\label{MinOut}
For definitions as above and $\omega_n\in S\left(\cH^{\otimes n}\right)$, we have
\begin{align}
\frac{1}{n}\inf_{\nu\in\cS}D\left(\rho_n(\nu)\middle\|\omega_n\right)\geq\frac{1}{n}\inf_{\rho\in\cS}D\left(\rho^{\otimes n}\middle\|\omega_n\right)-\frac{2d^2\log(n+1)}{n}\,,
\end{align}
where $d:=\mathrm{dim}\left(\cH\right)$.
\end{lemma}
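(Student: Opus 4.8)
The plan is to combine Donald's identity with a Schur--Weyl estimate on the Holevo quantity of the ensemble of iid states. Fix a measure $\nu$ on $\cS$ and write $\chi_n(\nu):=\int D\big(\rho^{\otimes n}\big\|\rho_n(\nu)\big)\,\mathrm{d}\nu(\rho)\geq0$ for the Holevo quantity of the ensemble $\{\mathrm{d}\nu(\rho),\rho^{\otimes n}\}_{\rho\in\cS}$. Donald's identity~\cite{donald86}, applied with barycenter $\rho_n(\nu)=\int\rho^{\otimes n}\,\mathrm{d}\nu(\rho)$, states that for every $\omega_n\in S(\cH^{\otimes n})$
\begin{align}
\int D\big(\rho^{\otimes n}\big\|\omega_n\big)\,\mathrm{d}\nu(\rho)=D\big(\rho_n(\nu)\big\|\omega_n\big)+\chi_n(\nu)\,.
\end{align}
The left-hand side is an average of quantities each no smaller than $\inf_{\rho\in\cS}D(\rho^{\otimes n}\|\omega_n)$, so rearranging gives $D(\rho_n(\nu)\|\omega_n)\geq\inf_{\rho\in\cS}D(\rho^{\otimes n}\|\omega_n)-\chi_n(\nu)$. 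Taking the infimum over $\nu$ and dividing by $n$, the lemma follows once we establish the \emph{uniform} bound $\chi_n(\nu)\leq\log\poly(n)$. (If any of the relative entropies involved is infinite the claimed inequality is trivial, so we may assume them all finite.)

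The main step is thus to bound the Holevo quantity, for which we use $\chi_n(\nu)=H(\rho_n(\nu))-\int H(\rho^{\otimes n})\,\mathrm{d}\nu(\rho)$ together with Schur--Weyl duality as recalled in the proof of Lem.~\ref{pinching}. Since $\rho^{\otimes n}$ is permutation invariant, in the Schur basis $\rho^{\otimes n}=\bigoplus_{\lambda\in\Lambda_n}p_\lambda(\rho)\,\hat\rho_\lambda(\rho)\otimes\frac{1_{P_\lambda}}{\dim P_\lambda}$ for a probability distribution $\{p_\lambda(\rho)\}_\lambda$ on $\Lambda_n$ and states $\hat\rho_\lambda(\rho)$ on $Q_\lambda$; averaging over $\nu$ preserves this form, $\rho_n(\nu)=\bigoplus_{\lambda\in\Lambda_n}q_\lambda\,\bar\rho_\lambda\otimes\frac{1_{P_\lambda}}{\dim P_\lambda}$ with $q_\lambda=\int p_\lambda(\rho)\,\mathrm{d}\nu(\rho)$ and $\bar\rho_\lambda$ a state on $Q_\lambda$. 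Expanding the two von Neumann entropies along these blocks, the multiplicity-space contribution $\sum_\lambda q_\lambda\log\dim P_\lambda$ in $H(\rho_n(\nu))$ equals $\int\sum_\lambda p_\lambda(\rho)\log\dim P_\lambda\,\mathrm{d}\nu(\rho)$ by linearity of $q_\lambda$ in $\nu$, so these (possibly exponentially large) terms cancel exactly; discarding the remaining nonnegative subtracted terms leaves
\begin{align}
\chi_n(\nu)\leq H(\{q_\lambda\})+\sum_{\lambda}q_\lambda H(\bar\rho_\lambda)\leq\log|\Lambda_n|+\max_{\lambda\in\Lambda_n}\log\dim Q_\lambda\leq\log\poly(n)\,,
\end{align}
where the last step uses $|\Lambda_n|\leq\poly(n)$ and $\dim Q_\lambda\leq\poly(n)$.

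The crux is precisely this cancellation of the $\log\dim P_\lambda$ terms: the multiplicity spaces $P_\lambda$ carry essentially all the entropy of the iid states but contribute identically to $H(\rho_n(\nu))$ and to the $\nu$-average of $H(\rho^{\otimes n})$, so the Holevo quantity only ever sees the polynomially many, polynomially small irrep blocks $Q_\lambda$. Everything else --- Donald's identity and the rearrangement --- is routine, and the only mild technical point is the handling of support conditions, dealt with by the triviality remark above. Finally, applying this lemma with $\omega_n=\sigma_n(\mu)$, combining with Eq.~\eqref{eq:intermediate}, and taking the infimum over $\mu\in\cT$ yields Prop.~\ref{lem:achievability}.
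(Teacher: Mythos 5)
Your proof is correct, but it takes a genuinely different route from the paper's. The paper first discretizes the measure via Carath\'eodory (Lem.~\ref{carat}), writing $\rho_n(\nu)=\sum_{i=1}^N p_i\rho_i^{\otimes n}$ with $N\leq\poly(n)$ thanks to the polynomial dimension of the symmetric subspace, and then applies the elementary bound $H\bigl(\sum_i p_i\rho_i^{\otimes n}\bigr)\leq\sum_i p_iH\bigl(\rho_i^{\otimes n}\bigr)+\log N$ (Lem.~\ref{caratentropy}) before taking a minimum over the finitely many components. You instead keep the continuous ensemble, use Donald's identity to reduce the claim to the single estimate $\chi_n(\nu)\leq\log\poly(n)$ on the Holevo quantity, and obtain that estimate from the Schur--Weyl block structure $\rho^{\otimes n}=\bigoplus_{\lambda}p_\lambda(\rho)\,\hat\rho_\lambda(\rho)\otimes\tfrac{1_{P_\lambda}}{\dim P_\lambda}$, observing that the (large) multiplicity-space contributions cancel between $H(\rho_n(\nu))$ and $\int H(\rho^{\otimes n})\,\mathrm{d}\nu$. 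Both arguments ultimately exploit the same phenomenon --- the family $\{\rho^{\otimes n}\}_\rho$ has only polynomial ``effective size'' --- but yours packages it as a clean statement about the Holevo information of iid ensembles and avoids the discretization step altogether, at the price of invoking the full Schur--Weyl decomposition rather than only the dimension count needed for Carath\'eodory. Your handling of the infinite-relative-entropy case and the passage from the $\nu$-average to the infimum over $\rho\in\cS$ are both fine; the only point worth spelling out in a final write-up is that Donald's identity extends to continuous ensembles by the same one-line trace computation as in the finite case.
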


\begin{proof}
For $\nu\in\cS$ and $H(\rho):=-\tr\left[\rho\log\rho\right]$ the von Neumann entropy, we observe the following chain of arguments 
\begin{align}
&\frac{1}{n} D\left(\rho_n(\nu)\middle\|\omega_n\right)\notag\\
&= \frac{1}{n} D\Big(\sum_{i=1}^Np_i\rho_i^{\otimes n}\Big\|\omega_n\Big)\\
&=-\frac{1}{n}H\Big(\sum_{i=1}^Np_i\rho_i^{\otimes n}\Big)-\frac{1}{n}\sum_{i=1}^Np_i\tr\left[\rho_i^{\otimes n}\log{\omega_n}\right]\\
&\geq-\frac{1}{n}\sum_{i=1}^N p_i H\left(\rho_i^{\otimes n}\right)-\frac{\log{(n+1)^{2d^2}}}{n}-\frac{1}{n}\sum_{i=1}^Np_i\tr\left[\rho_i^{\otimes n}\log{\omega_n}\right]\\
&\geq\min_{\rho_i}\frac{1}{n}D\left(\rho^{\otimes n}_i\middle\|\omega_n\right)-\frac{2d^2\log(n+1)}{n}\\
&\geq\inf_{\rho\in\cS}\frac{1}{n}D\left(\rho^{\otimes n}\middle\|\omega_n\right)-\frac{2d^2\log(n+1)}{n}\,,
\end{align}
where the first equality holds by an application of Carath\'edory's theorem with $N\leq(n+1)^{2d^2}$ (Lemma \ref{carat}), and the first inequality by an almost-convexity property of the von Neumann entropy (Lemma \ref{caratentropy}). All other steps are elementary. Since the above argument holds for all $\nu\in\cS$, the claim follows.
\end{proof}

Lemma \ref{MinOut} together with Equation \eqref{eq:intermediate} gives for $\eps\to0$ that
\begin{align}
\limsup_{n\to\infty}\frac{1}{n}\inf_{\substack{\rho\in\cS\\\mu\in\cT}}D\left(\rho^{\otimes n} \|\sigma_n(\mu)\right) &\leq \sup_{k\in\bN} \frac{1}{k}  \inf_{\substack{\nu\in\cS\\\mu\in\cT}} D_\cM(\rho_k(\nu) \| \sigma_k(\mu)) \\ 
&\leq \lim_{\eps\rightarrow 0} \liminf_{n\to\infty}\zeta_{\cS,\cT}(n,\eps) \\
&\leq \liminf_{n\to\infty}\frac{1}{n}\inf_{\substack{\rho\in\cS\\\mu\in\cT}}D\left(\rho^{\otimes n} \|\sigma_n(\mu)\right)\,,
\end{align}
where the last step follows from Equation~\eqref{Eq:ConverseLimInf}. This shows that the limit
\begin{align}
\lim_{n\to\infty}\frac{1}{n}\inf_{\substack{\rho\in\cS\\\mu\in\cT}}D\left(\rho^{\otimes n} \|\sigma_n(\mu)\right)
\end{align}
exists and all the inequalities above hold as equalities. Furthermore, we have 
\begin{align}
\limsup_{n\to\infty}\frac{1}{n}\inf_{\substack{\rho\in\cS\\\mu\in\cT}}D\left(\rho^{\otimes n} \|\sigma_n(\mu)\right) &\leq \lim_{\eps\rightarrow 0}\limsup_{n\to\infty}\zeta_{\cS,\cT}(n,\eps) \\
&\leq \limsup_{n\to\infty}\frac{1}{n}\inf_{\substack{\rho\in\cS\\\mu\in\cT}}D\left(\rho^{\otimes n} \|\sigma_n(\mu)\right)\,,
\end{align}
which concludes the proof of Theorem~\ref{thm:main}.
\qed



\section{Examples and extensions}\label{sec:examples}

Here, we discuss several concrete examples of composite discrimination problems\,---\,some of which have a single-letter solution.


\subsection{Relative entropy of coherence}\label{sec:coherence}

Following the literature around~\cite{Baumgratz14}, the set of states diagonal in a fixed basis $\{|c\rangle\}$ is called incoherent and denoted by $\cC\subseteq S(\cH)$. The relative entropy of coherence of $\rho\in S(\cH)$ is defined as
\begin{align}
D_{\cC}(\rho):=\inf_{\sigma\in\cC} D(\rho\|\sigma)\,.
\end{align}
Based on our main result (Theorem \ref{thm:main}), we can characterize the following discrimination problem.

\begin{description}
\item[{\bf Null hypothesis}:] the fixed state $\rho^{\otimes n}$
\item[{\bf Alternative hypothesis}:] the convex hull of iid coherent states
\begin{align}
\bar{\cC}_n:=\Big\{\int\sigma^{\otimes n}\;\mathrm{d}\mu(\sigma)\Big|\mu\in\cC\Big\}
\end{align}
\end{description}

Namely, Theorem \ref{thm:main} gives
\begin{align}\label{eq:coherence}
\zeta_{\bar{\cC}}(\infty,0):=&\lim_{\eps\to0}\lim_{n\to\infty}\zeta_{\bar{\cC}}(n,\eps) \\
=&\lim_{n\to\infty}\frac{1}{n}\inf_{\mu\in\cC}D\Big(\rho^{\otimes n}\Big\|\int\sigma^{\otimes n}\;\mathrm{d}\mu(\sigma)\Big)\\
=&D_{\cC}(\rho)\,,
\end{align}
where the limit in Equation~\eqref{eq:coherence} exists because the relative entropy of coherence is additive on product states \cite{Chitambar16}, and the last step follows from a general property of the relative entropy of coherence (Lemma \ref{relEntropyGasym}) applied to the decohering channel. In fact, there is even a single-letter solution for the following less restricted discrimination problem.

\begin{description}
\item[{\bf Null hypothesis}:] the fixed state $\rho^{\otimes n}$
\item[{\bf Alternative hypothesis}:] the convex set of coherent states $\cC_n$
\end{description}

It is straightforward to check that this hypothesis testing problem fits the general framework of~\cite{BP10}, leading to
\begin{align}\label{eq:coherence_brandao}
\zeta_{\cC}(\infty,\eps):=\lim_{n\to\infty}\zeta_{\cC}(n,\eps)=\lim_{n\to\infty}\frac{1}{n}\inf_{\sigma_n\in\cC_n}D\left(\rho^{\otimes n}\|\sigma_n\right)=D_{\cC}(\rho)\quad\forall\eps\in(0,1)\,,
\end{align}
where the last step again follows from a general property of the relative entropy of coherence (Lemma \ref{relEntropyGasym}). Thus, we have two a priori different hypothesis testing scenarios that both give an operational interpretation to the relative entropy of coherence. In the following we give a simple self-contained proof of Equation \eqref{eq:coherence_brandao} that is different from the rather involved steps in~\cite{BP10} and instead follows ideas from~\cite{Audenaert2008,HT14}. The goal is the quantification of the optimal asymptotic error exponent
\begin{align}
\zeta_{\cC}(n,\eps)&:=-\frac{1}{n}\log\inf_{\substack{0\ll M_n\ll1\\ \tr\left[M_n\rho^{\otimes n}\right]\geq1-\eps}}\sup_{\sigma_n\in\cC_n}\tr\left[M_n\sigma_n\right] \\ 
\mathrm{with}\quad\zeta_{\cC}(\infty,\eps)&:=\lim_{n\to\infty}\zeta_{\cC}(n,\eps)\,.
\end{align}

\begin{proposition}\label{thm:convex_achievability}
For the discrimination problem as above with $\eps\in(0,1)$, we have
\begin{align}
\zeta_{\cC}(\infty,\eps)=D_{\cC}(\rho)\,.
\end{align}
\end{proposition}

Note that Proposition \ref{thm:convex_achievability} is independent of $\text{supp}(\rho)$ as the set $\cC$ includes full rank states. A weak converse for $\eps\to0$ follows exactly as in Lemma \ref{lem:converse_lemma}, together with Lemma \ref{relEntropyGasym} to make the expression single-letter. For the strong converse as claimed in Proposition \ref{thm:convex_achievability}, we make use of a general family of quantum R\'enyi entropies: the Petz divergences~\cite{petz_statbook}. For $\rho,\sigma\in S(\cH)$ and $s\in(0,1)\cup(1,\infty)$ they are defined as
\begin{align}
D_s\left(\rho\middle\|\sigma\right):=\frac{1}{s-1}\log\tr\left[\rho^s\sigma^{1-s}\right]\,,
\end{align}
whenever either $s<1$ and $\rho$ is not orthogonal to $\sigma$ in Hilbert-Schmidt inner product or $s>1$ and the support of $\rho$ is contained in the support of $\sigma$. (Otherwise we set $D_s(\rho\|\sigma):=\infty$.) The corresponding R\'enyi relative entropies of coherence are given by \cite{Chitambar16}
\begin{align}\label{eq:coherence_additivity}
\text{$D_{s,\cC}(\rho):=\inf_{\sigma\in\cC}D_s(\rho\|\sigma)$ with the additivity property $D_{s,\cC}\left(\rho^{\otimes n}\right)=n D_{s,\cC}(\rho)$.}
\end{align}
Using similar standard arguments \cite{Nagaoka05} as in Lemma \ref{lem:converse_lemma} but based on the monotonicity of the Petz divergences, we find for $s\in(1,2]$ that
\begin{align}
&-\frac{1}{n}\log\inf_{0\leq M_n\leq1}\Big\{\tr\left[M_n\sigma_n\right]\Big|\tr\left[(1-M_n)\rho^{\otimes n}\right]\leq\eps\Big\}\notag\\
&\leq\frac{1}{n}\cdot D_s\left(\rho^{\otimes n}\middle\|\sigma_n\right)+\frac{1}{n}\frac{s}{s-1}\frac{1}{\log(1-\eps)}\,.
\end{align}
By taking the infimum over $\sigma_n\in\cC_n$, a basic application of Sion's minimax theorem (Lemma \ref{Sion}), using the additivity from Equation~\eqref{eq:coherence_additivity}, taking the limit $n\to\infty$ as well as the limit \cite{Chitambar16}
\begin{align}\label{eq:newC}
\lim_{s\to 1}D_{s,\cC}(\rho)=D_{\cC}(\rho)\,,
\end{align}
we find the claimed strong converse $\zeta_{\cC}(\infty,\eps)\leq D_{\cC}(\rho)$. The achievability direction of Proposition \ref{thm:convex_achievability} is based on the Petz divergences as well.

\begin{lemma}\label{thm:general_convex}
For the discrimination problems as above with $n\in\mathbb{N}$ and $\eps\in(0,1)$, we have for $s\in(0,1)$ that
\begin{align}\label{eqn:general_convex}
\zeta_{\cC}(n,\eps)\geq D_{s,\cC}(\rho)-\frac{1}{n}\frac{s}{1-s}\log\frac{1}{\eps}\,.
\end{align}
\end{lemma}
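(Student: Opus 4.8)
\textbf{Proof plan for Lemma~\ref{thm:general_convex}.}
The plan is to establish the one-shot achievability bound $\zeta_{\cC}^n(\eps)\geq D_{s,\cC}(\rho)-\frac{1}{n}\cdot\frac{s}{1-s}\log\frac{1}{\eps}$ via an explicit test constructed from a near-optimal incoherent state, following the Audenaert--Nussbaum--type argument referenced in~\cite{Audenaert2008,HT14,TH15}. First I would fix an arbitrary $\sigma\in\cC$ achieving (up to an arbitrarily small slack) the infimum defining $D_{s,\cC}(\rho)$, and consider the $n$-copy problem with null state $\rho^{\otimes n}$ and alternative $\sigma^{\otimes n}$. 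The key point is that because the alternative set $\cC_n$ is closed under the pinching map associated to any incoherent $\sigma^{\otimes n}$ (indeed $\sigma^{\otimes n}$ is itself diagonal in the product incoherent basis), one can choose $M_n$ to be the projector onto the subspace where $\rho^{\otimes n}\gg \gamma\, \sigma^{\otimes n}$ for a suitable threshold $\gamma = \gamma(n,\eps)$, or alternatively the more symmetric choice $M_n := \{\rho^{\otimes n} \geq \gamma\,\sigma^{\otimes n}\}$ (a difference operator spectral projection). The Type~1 error and Type~2 error for this test can be bounded by a standard Chernoff/Markov trick using the operator inequality $\{\rho \geq \gamma \sigma\}\leq \gamma^{-s}\rho^{s/2}\sigma^{-s}\cdots$ — more cleanly, one uses $(1-M_n)\leq \gamma^{-s}\sigma^{s}\# \cdots$; the precise inequality is $\tr[(1-M_n)\rho^{\otimes n}]\leq \gamma^{-s}\tr[(\rho^{\otimes n})^{s}(\sigma^{\otimes n})^{1-s}]\cdot$ wait — the correct bookkeeping is $\tr[(1-M_n)\rho^{\otimes n}] \le \gamma^{s}\,\mathrm{Tr}[(\rho^{\otimes n})^{1-s}(\sigma^{\otimes n})^{s}]$ for $s\in(0,1)$, and $\tr[M_n\sigma^{\otimes n}]\le \gamma^{-(1-s)}\mathrm{Tr}[(\rho^{\otimes n})^{1-s}(\sigma^{\otimes n})^{s}]$, both following from the elementary scalar bounds $\mathbbm{1}\{a\geq b\}\,a \le a^{1-s}b^{s}$ and $\mathbbm{1}\{a\geq b\}\,b \le a^{1-s}b^{s}$ together with the fact that $\rho^{\otimes n}$ and $\sigma^{\otimes n}$ (hence their difference) commute only after pinching — so I would first pinch $\rho^{\otimes n}$ by $\mathcal{P}_{\sigma^{\otimes n}}$, which costs nothing in the measured-entropy interpretation and does not increase the relevant traces by more than a $\poly(n)$ factor that I will absorb, or better, note $\sigma^{\otimes n}$ already has $\poly(n)$-many eigenvalues so the pinching penalty is subleading.

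Carrying this out concretely: after the (free, up to $\poly(n)$) pinching reduction we may assume $\rho^{\otimes n}$ and $\sigma^{\otimes n}$ commute, set $M_n := \{\rho^{\otimes n}\geq \gamma\,\sigma^{\otimes n}\}$, and obtain $\alpha_n(M_n)\le \gamma^{s}\,2^{-(s-1)D_s(\rho^{\otimes n}\|\sigma^{\otimes n})}$ — using $\mathrm{Tr}[(\rho^{\otimes n})^{s}(\sigma^{\otimes n})^{1-s}] = 2^{(s-1)D_s(\rho^{\otimes n}\|\sigma^{\otimes n})} = 2^{(s-1)nD_s(\rho\|\sigma)}$ by additivity of the Petz divergence on tensor powers — and $\beta_n(M_n)\le \gamma^{-(1-s)}\,2^{(s-1)nD_s(\rho\|\sigma)}$. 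Choosing $\gamma$ so that the Type~1 error bound equals $\eps$, i.e. $\gamma = \big(\eps\cdot 2^{(1-s)nD_s(\rho\|\sigma)}\big)^{1/s}$, and substituting into the Type~2 bound yields $\beta_n(M_n)\le \eps^{-(1-s)/s}\,2^{-nD_s(\rho\|\sigma)}$, so that $-\frac1n\log\beta_n(M_n)\ge D_s(\rho\|\sigma) - \frac1n\cdot\frac{1-s}{s}\log\frac1\eps$. The crucial last step — and the one place the coherence structure is genuinely used rather than generic Stein machinery — is to pass from the fixed-$\sigma$ bound to $D_{s,\cC}(\rho)$: since $\sup_{\sigma_n\in\cC_n}\tr[M_n\sigma_n]$ is a supremum over the whole convex set $\cC_n$ (not just iid $\sigma^{\otimes n}$), I must verify that the test $M_n$ built from the single near-optimal $\sigma$ still performs well against \emph{all} $\sigma_n\in\cC_n$. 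This is where I would invoke the additivity/multiplicativity property~\eqref{eq:coherence_additivity}: optimizing $\sigma\in\cC$ in $D_{s,\cC}(\rho)$ and using $D_{s,\cC}(\rho^{\otimes n}) = nD_{s,\cC}(\rho)$ lets one replace $D_s(\rho\|\sigma)$ by $D_{s,\cC}(\rho)$ after the optimization, but one still needs that $\tr[M_n\sigma_n]\le \poly(n)\cdot 2^{-nD_{s,\cC}(\rho)+\text{slack}}$ uniformly; I would handle this by replacing the "pinched difference operator" test by the universal test $M_n$ adapted to the permutation-symmetric structure (the set $\cC_n$ of incoherent states on $\cH^{\otimes n}$ is permutation invariant), so that a single $M_n$ — e.g. built from the projection onto the "typical" incoherent subspace, or via a minimax/Sion argument over $\sigma_n\in\cC_n$ as in Lem.~\ref{applySion} applied to $D_s$ rather than $D_{\mathcal{M}}$ — dominates the whole alternative set.

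Thus the key steps in order are: (i) reduce to commuting operators by pinching with $\mathcal{P}_{\sigma^{\otimes n}}$, absorbing the $\poly(n) = |\mathrm{spec}(\sigma^{\otimes n})|$ factor into the (additively $O(\log\poly(n)/n) = o(1)$, but here we want it gone entirely, so instead observe that for the final $s\to1,n\to\infty$ limit only the leading term matters and the lemma as stated tolerates no $\poly(n)$ — hence I would instead take the test \emph{directly} on $\rho^{\otimes n}$ and $\sigma^{\otimes n}$ using the operator monotonicity form $A^{1-s}\#_{?}$, or simply note that Audenaert's inequality $\tr[\{A\ge B\}A] \le \tr[A^{1-s}B^s]$ holds for noncommuting positive $A,B$ as well, so no pinching is needed at all); (ii) apply additivity of $D_s$ on tensor powers; (iii) optimize the threshold $\gamma$ to meet the Type~1 constraint exactly; (iv) read off the Type~2 exponent; (v) promote $D_s(\rho\|\sigma)$ to $D_{s,\cC}(\rho)$ via the additivity~\eqref{eq:coherence_additivity} together with a minimax argument ensuring the chosen test is simultaneously good against the whole convex alternative $\cC_n$. \textbf{The main obstacle} is step (v): controlling $\sup_{\sigma_n\in\cC_n}\tr[M_n\sigma_n]$ with a test $M_n$ that does not depend on $\sigma_n$, which is precisely the composite nature of the alternative hypothesis; everything else is the standard Chernoff-bound computation. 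I expect this to be resolved by the same Sion minimax technique already used in Lem.~\ref{applySion}, now applied with the $s$-Rényi (Petz) quantity in place of the measured relative entropy, exploiting quasi-convexity of $\sigma_n\mapsto \tr[(\rho^{\otimes n})^{1-s}\sigma_n^{s}]$ on the convex set $\cC_n$.
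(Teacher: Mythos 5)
You have assembled the right ingredients --- Audenaert's inequality (Lem.~\ref{lem:audenaert}) for the per-instance test analysis, the additivity $D_{s,\cC}(\rho^{\otimes n})=n\,D_{s,\cC}(\rho)$ from Eq.~\eqref{eq:coherence_additivity}, and Sion's minimax theorem --- but the step you yourself flag as ``the main obstacle'' (your step (v)) is left unresolved, and the resolution you sketch points in the wrong direction. A test $M_n=\{\rho^{\otimes n}-\gamma\,\sigma^{\otimes n}\}_+$ built from a \emph{single} near-optimal $\sigma\in\cC$ does not obviously control $\sup_{\sigma_n\in\cC_n}\tr[M_n\sigma_n]$: since $\cC_n$ is the convex hull of the product-basis projectors, that supremum equals the largest diagonal entry $\max_{c^n}\langle c^n|M_n|c^n\rangle$ in the incoherent basis, whereas Audenaert's inequality only bounds the $\sigma^{\otimes n}$-weighted \emph{average} of these entries; passing from the average to the maximum costs a factor $\left(\min_c\langle c|\sigma|c\rangle\right)^{-n}$, which is exponentially large. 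Likewise, applying a Sion-type argument ``to $D_s$ rather than $D_{\mathcal{M}}$'' does not address the issue, because the quantity that must be controlled is the operational error $\sup_{\sigma_n\in\cC_n}\tr[M_n\sigma_n]$, not an entropic functional.

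The correct (and simpler) move, which is the paper's very first step, is to apply Sion's minimax theorem \emph{before} constructing any test, to the bilinear functional $\tr[M_n\sigma_n]$ over the compact convex set $\left\{0\leq M_n\leq1\,:\,\tr[M_n\rho^{\otimes n}]\geq1-\eps\right\}$ and the convex set $\cC_n$, yielding Eq.~\eqref{eq:sion_applied}. After the swap the test may depend on the particular $\sigma_n\in\cC_n$ (which need not be iid), so one takes $M_n(\lambda_n)=\{\rho^{\otimes n}-2^{\lambda_n}\sigma_n\}_+$ with $\lambda_n=D_s\left(\rho^{\otimes n}\middle\|\sigma_n\right)+\log\eps^{\frac{1}{1-s}}$; Audenaert's inequality (valid without any commutativity or pinching, as you correctly note in your self-correction) gives both error bounds, and the single-letter reduction follows from $D_s\left(\rho^{\otimes n}\middle\|\sigma_n\right)\geq\inf_{\sigma_n\in\cC_n}D_s\left(\rho^{\otimes n}\middle\|\sigma_n\right)=n\,D_{s,\cC}(\rho)$, where it is essential that the additivity in Eq.~\eqref{eq:coherence_additivity} is an infimum over \emph{all} of $\cC_n$ and not only over iid alternatives. (Separately, your exponent bookkeeping mixes $\tr[A^sB^{1-s}]$ with $\tr[A^{1-s}B^s]$ and ends with a penalty $\frac{1-s}{s}\log\frac{1}{\eps}$ rather than the claimed $\frac{s}{1-s}\log\frac{1}{\eps}$; with a consistent parametrization this resolves itself.)
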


Taking the limit $n\to\infty$ as well as the limit $s\to1$ using Equation~\eqref{eq:newC}, we then find the claimed achievability $\zeta_{\cC}(\infty,\eps)\geq D_{\cC}(\rho)$.

\begin{proof}[Proof of Lemma \ref{thm:general_convex}]
It is straightforward to check with Sion's minimax theorem (Lemma \ref{Sion}) that
\begin{align}\label{eq:sion_applied}
\inf_{\substack{0\leq M_n\leq1\\ \tr\left[M_n\rho^{\otimes n}\right]\geq1-\eps}}\sup_{\sigma_n\in\cC_n}\tr\left[M_n\sigma_n\right]=\sup_{\sigma_n\in\cC_n}\inf_{\substack{0\leq M_n\leq1\\ \tr\left[M_n\rho^{\otimes n}\right]\geq1-\eps}}\tr\left[M_n\sigma_n\right]\,.
\end{align}
Now, for $\lambda_n\in\mathbb{R}$ with $n\in\mathbb{N}$ we choose $M_n(\lambda_n):=\left\{\rho^{\otimes n}-2^{\lambda_n}\sigma_n\right\}_+$ where $\{\cdot\}_+$ denotes the projector on the eigenspace of the positive spectrum. We have $0\ll M_n(\lambda_n)\ll1$ and by Audenaert's inequality (Lemma \ref{lem:audenaert}) with $s\in(0,1)$ we get
\begin{align}\label{eq:audenaert_one}
\tr\left[(1-M_n(\lambda_n))\rho^{\otimes n}\right]\leq2^{(1-s)\lambda_n}\tr\left[\left(\rho^{\otimes n}\right)^s\sigma_n^{1-s}\right]=2^{(1-s)\left(\lambda_n-D_s\left(\rho^{\otimes n}\middle\|\sigma_n\right)\right)}\,.
\end{align}
Moreover, again Audenaert's inequality (Lemma \ref{lem:audenaert}) for $s\in(0,1)$ implies
\begin{align}\label{eq:audenaert_two}
\tr\left[M_n(\lambda_n)\sigma_n\right]\leq2^{-s\lambda_n}\tr\left[\left(\rho^{\otimes n}\right)^s\sigma_n^{1-s}\right]=2^{-s\lambda_n-(1-s)D_s\left(\rho^{\otimes n}\middle\|\sigma_n\right)}\,.
\end{align}
Hence, choosing
\begin{align}
\text{$\lambda_n:=D_s\left(\rho^{\otimes n}\middle\|\sigma_n\right)+\log\eps^{\frac{1}{1-s}}$ with $M_n:=M_n(\lambda_n)$}
\end{align}
leads with Equation \eqref{eq:audenaert_one} to $\tr\left[M_n\rho^{\otimes n}\right]\geq1-\eps$. Finally, Equation \eqref{eq:sion_applied} together with Equation \eqref{eq:audenaert_two} and the additivity property from Equation \eqref{eq:coherence_additivity} leads to the claim.
\end{proof}

We note that a more refined analysis of the above calculation allows to determine the Hoeffding bound as well as the strong converse exponent (cf.~\cite{Audenaert2008,HT14}). The former gives an operational interpretation to the R\'enyi relative entropy of coherence $D_{s,\cC}(\rho)$, whereas the latter gives an operational interpretation to the sandwiched R\'enyi relative entropies of coherence~\cite{Chitambar16}
\begin{align}
\tilde{D}_{s,\cC}(\rho):=\inf_{\sigma\in\cC}\tilde{D}_s(\rho\|\sigma)
\end{align} 
with the sandwiched R\'enyi entropies
\begin{align}
\tilde{D}_s(\rho\|\sigma):=\frac{1}{s-1}\log\tr\left[\left(\sigma^{\frac{1-s}{2s}}\rho\sigma^{\frac{1-s}{2s}}\right)^s\right]
\end{align}
whenever either $s<1$ and $\rho$ is not orthogonal to $\sigma$ in Hilbert-Schmidt inner product or $s>1$ and the support of $\rho$ is contained in the support of $\sigma$~\cite{muller2013quantum,WWY14}. (Otherwise we set $D_s(\rho\|\sigma):=\infty$.) The crucial insight for the proof is again the additivity property $\tilde{D}_{s,\cC}\left(\rho^{\otimes n}\right)=n\tilde{D}_{s,\cC}(\rho)$, that was already shown in~\cite{Chitambar16}.


\subsection{Relative entropy of recovery}\label{sec:hypothesis_recovery}

The relative entropy of recovery of $\rho_{ABC}\in S(\cH_{ABC})$ and its regularized version are defined as~\cite{SW14,BHOS15,BFT15}\footnote{This limit exists and is finite as for $a_n:=D(A;B|C)_{\rho^{\otimes n}}\geq0$ we have the monotonicity property $a_{n+m}\leq a_n+a_m$.}
\begin{align}
&D(A;B|C)_\rho:=\inf_{\cR} D\big(\rho_{ABC}\big\|(\mathcal{I}_{A}\otimes\cR_{C\to BC})\left(\rho_{AC}\right)\big)\\ 
\mathrm{and}\quad &D^{\infty}(A;B|C)_\rho:=\lim_{n\to\infty}\frac{1}{n}D(A;B|C)_{\rho^{\otimes n}}\,,
\end{align}
where the infimum goes over all completely positive and trace preserving maps $\cR_{C\to BC}$. It was recently shown that in general \cite{FF17}
\begin{align}
D^{\infty}(A;B|C)_\rho\neq D(A;B|C)_\rho\,.
\end{align}
Using the framework from~\cite{BP10}, the following discrimination problem was linked to the regularized relative entropy of recovery~\cite{CHMOSWW16}.

\begin{description}
\item[{\bf Null hypothesis}:] the fixed state $\rho_{ABC}^{\otimes n}$
\item[{\bf Alternative hypothesis}:] for any $\cR_{C^n\to B^nC^n}$ completely positive and trace preserving, the convex set of states
\begin{align}
\cR^n:=\left\{(\cI_{A^n}\otimes\cR_{C^n\to B^nC^n})\left(\rho_{AC}^{\otimes n}\right)\right\}
\end{align}
\end{description}

Namely, for $\eps\in(0,1)$ we have for the corresponding asymptotic error exponent
\begin{align}
\zeta_{\cR}(\infty,\eps):=\lim_{n\to\infty}\zeta_{\cR}(n,\eps)=D^{\infty}(A;B|C)_\rho\,.
\end{align}
In contrast, our main result (Theorem \ref{thm:main}) covers the following discrimination problem.

\begin{description}
\item[{\bf Null hypothesis}:] the fixed state $\rho_{ABC}^{\otimes n}$
\item[{\bf Alternative hypothesis}:] for any $\cR_{C\to BC}$ completely positive and trace preserving, the convex hull of iid states
\begin{align}
\bar{\cR}^n:=\Big\{\int\left((\cI_A\otimes\cR_{C\to BC})(\rho_{AC})\right)^{\otimes n}\;\mathrm{d}\mu(\cR)\Big\}\,.
\end{align}
\end{description}


Interestingly, we can show that the asymptotic error exponents of the two discrimination problems are actually identical.

\begin{proposition}
With the definitions as above, we have
\begin{align}
\lim_{n\to\infty}&\frac{1}{n}\inf_{\cR} D\big(\rho_{ABC}^{\otimes n}\big\|(\cI_A\otimes\cR_{C^n\to B^nC^n})\left(\rho_{AC}^{\otimes n}\right)\big) \nonumber\\
&= \lim_{n\to\infty} \frac{1}{n} \inf_{\mu\in\cR} D\Big(\rho_{ABC}^{\otimes n}\Big\|\int\big((\cI_A\otimes\cR_{C\to BC})(\rho_{AC})\big)^{\otimes n}\;\mathrm{d}\mu(\cR)\Big)\,.
\end{align}
\end{proposition}

\begin{proof}
One direction of the inequality is by definition and for the other direction we use a de Finetti reduction for quantum channels \cite[Lemma 8]{BHOS15} that was first derived in \cite{FR14}. Namely, we have for $\omega_{C^n}\in S\left(\cH_C^{\otimes n}\right)$ and permutation invariant $\cR_{C^n\to B^nC^n}$ that
\begin{align}
\cR_{C^n\to B^nC^n}\left(\omega_{C^n}\right)\ll\poly(n)\int\left(\cR_{C\to BC}\right)^{\otimes n}\left(\omega_{C^n}\right)\mathrm{d}\nu(\cR)
\end{align}
for some measure $\mathrm{d}\nu(\cR)$ over the completely positive and trace preserving maps on $C\to BC$. As explained in the proof of~\cite[Proposition 9]{BHOS15}, the joint convexity of the quantum relative entropy together with the operator monotonicity of the logarithm then imply that
\begin{align}
&D\left(\rho_{ABC}^{\otimes n}\middle\|\cR_{C^n\to B^nC^n}\left(\rho_{AC}^{\otimes n}\right)\right)\qquad \nonumber\\ 
&\geq D\Big(\rho_{ABC}^{\otimes n}\Big\|\int\big((\cI_A\otimes\cR_{C\to BC})(\rho_{AC})\big)^{\otimes n}\;\mathrm{d}\nu(\cR)\Big)-\log\poly(n)\,.
\end{align}
\end{proof}

As such, we can conclude that
\begin{align}
\zeta_{\bar{\cR}}(\infty,0) &:=\lim_{\eps\to0}\liminf_{n\to\infty}\zeta_{\bar{\cR}}(n,\eps) \nonumber\\
&= \lim_{\eps\to0}\limsup_{n\to\infty}\zeta_{\bar{\cR}}(n,\eps) = D^{\infty}(A;B|C)_\rho\,. \label{Eq:RelRecovExpo}
\end{align}



\subsection{Quantum mutual information}\label{ssec:QMI}

The quantum mutual information of $\rho_{AB}\in S(\cH_{AB})$ is defined as
\begin{align}
I(A:B)_\rho:=H(A)_\rho+H(B)_\rho-H(AB)_\rho\,.
\end{align}
Our main result from Section \ref{sec:mainresult} provides a solution to the following discrimination problem.

\begin{description}
\item[{\bf Null hypothesis}:] the fixed state $\rho_{AB}^{\otimes n}$
\item[{\bf Alternative hypothesis}:] the convex hull of iid states
\begin{align}
\bar{\cT}_{A^n:B^n}:=\Big\{\rho_A^{\otimes n}\otimes\int\sigma_B^{\otimes n}\;\mathrm{d}\mu(\sigma)\Big|\mu\in S(\cH_B)\Big\}\,.
\end{align}
\end{description}

Namely, we have
\begin{align}
\bar{\zeta}_{A:B}(\infty,0):=&\lim_{\eps\to0}\lim_{n\to\infty}\bar{\zeta}_{A:B}(n,\eps)\\
=&\lim_{n\to\infty}\frac{1}{n}\inf_{\mu\in\bar{\cT}}D\Big(\rho^{\otimes n}_{AB}\Big\|\rho_A^{\otimes n}\otimes\int\sigma_B^{\otimes n}\;\mathrm{d}\mu(\sigma)\Big)\\
=&I(A:B)_\rho\,.
\end{align}
Here, the last equality follows from the easily checked identity
\begin{align}\label{eq:mutualinfo_identity}
I(A:B)_\rho=\inf_{\sigma_B\in S(\cH)}D(\rho_{AB}\|\rho_A\otimes\sigma_B)\,.
\end{align}
More general composite discrimination problems leading to the quantum mutual information were solved in~\cite{HT14} and in the following we further extend these results (cf.~the classical work \cite{TH15}).

\begin{description}
\item[{\bf Null hypothesis}:] the fixed state $\rho_{AB}^{\otimes n}$
\item[{\bf Alternative hypothesis}:] the set of states
\begin{align}
\cT_{A^n:B^n}:=\left\{\sigma_{A^n}\otimes\sigma_{B^n}\in S\left(\cH_{AB}^{\otimes n}\right)\middle|\sigma_{A^n}\;\text{or}\;\sigma_{B^n}\;\text{permutation invariant}\right\}\,.
\end{align}
\end{description}

The goal is again the quantification of the optimal asymptotic error exponent
\begin{align}
&\zeta_{A:B}(n,\eps):=-\frac{1}{n}\log\inf_{\substack{0\ll M_n\ll1\\ \tr\left[M_n\rho^{\otimes n}\right]\geq1-\eps}}\sup_{\sigma_{A^n}\otimes\sigma_{B^n}\in\cT_n}\tr\left[M_{A^nB^n}\sigma_{A^n}\otimes\sigma_{B^n}\right]\label{eq:error_mutual}\\
&\mathrm{with}\quad\zeta_{A:B}(\infty,\eps):=\lim_{n\to\infty}\zeta_{A:B}(n,\eps)\,.
\end{align}
Note that the sets $\cT_{A^nB^n}$ are not convex and hence the minimax technique used in Section \ref{sec:coherence} does not work here. However, following the ideas in~\cite{HT14,TH15} we can exploit the permutation invariance and use de Finetti reductions of the form~\cite{Hayashi2009,christandl09} to find the following.

\begin{proposition}\label{prop:quantum_mutual}
For the discrimination problem as above with $\eps\in(0,1)$, we have
\begin{align}
\zeta_{A:B}(\infty,\eps)=I(A:B)_\rho\,.
\end{align}
\end{proposition}

The achievability direction is based on the following lemma.

\begin{lemma}\label{eq:mutual-new}
For the discrimination problem as above with $n\in\mathbb{N}$ and $\eps\in(0,1)$, we have for $s\in(0,1)$ that
\begin{align}
\zeta_{A:B}(n,\eps)\geq\inf_{\sigma\in S(\cH)}D_s\left(\rho_{AB}\middle\|\sigma_A\otimes\sigma_B\right)-\frac{1}{n}\frac{s}{1-s}\log\frac{1}{\eps}-\frac{\log\poly(n)}{n}\,.
\end{align}
\end{lemma}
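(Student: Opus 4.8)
The plan is to follow the same template used in Lemma~\ref{thm:general_convex} for the relative entropy of coherence, adapting it to the non-convex alternative set $\cT_{A^n:B^n}$ by replacing the Sion minimax step with a de Finetti reduction. Concretely, I would fix $s\in(0,1)$ and a target state $\sigma\in S(\cH)$, and for parameters $\lambda_n\in\mathbb{R}$ consider the Neyman--Pearson type test $M_n(\lambda_n):=\left\{\rho_{AB}^{\otimes n}-2^{\lambda_n}\,\sigma_{A^n}\otimes\sigma_{B^n}\right\}_+$ where one chooses $\sigma_{A^n}=\sigma_A^{\otimes n}$ and $\sigma_{B^n}=\sigma_B^{\otimes n}$. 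Two applications of Audenaert's inequality (Lem.~\ref{lem:audenaert}) then give, exactly as in Eqs.~\eqref{eq:audenaert_one}--\eqref{eq:audenaert_two}, the bounds $\tr\left[(1-M_n(\lambda_n))\rho_{AB}^{\otimes n}\right]\leq 2^{(1-s)(\lambda_n-D_s(\rho_{AB}^{\otimes n}\|\sigma_A^{\otimes n}\otimes\sigma_B^{\otimes n}))}$ and $\tr\left[M_n(\lambda_n)\,\sigma_A^{\otimes n}\otimes\sigma_B^{\otimes n}\right]\leq 2^{-s\lambda_n-(1-s)D_s(\rho_{AB}^{\otimes n}\|\sigma_A^{\otimes n}\otimes\sigma_B^{\otimes n})}$; using additivity of the Petz divergence on tensor powers, $D_s(\rho_{AB}^{\otimes n}\|\sigma_A^{\otimes n}\otimes\sigma_B^{\otimes n})=n\,D_s(\rho_{AB}\|\sigma_A\otimes\sigma_B)$, and picking $\lambda_n:=n\,D_s(\rho_{AB}\|\sigma_A\otimes\sigma_B)+\log\eps^{1/(1-s)}$ ensures the Type~1 constraint $\tr\left[M_n\rho_{AB}^{\otimes n}\right]\geq 1-\eps$ while controlling the Type~2 error.

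The new ingredient, and the place that differs from the convex case, is that this fixed test $M_n$ must have small error against \emph{all} product states $\sigma_{A^n}\otimes\sigma_{B^n}$ in $\cT_{A^n:B^n}$, not just the iid ones, and the set is not convex so we cannot simply invoke Sion's theorem to swap $\inf_{M_n}\sup_{\sigma\otimes\sigma}$. Here I would follow the strategy of~\cite{HT14,TH15}: symmetrize the test $M_n$ over the permutation group (which does not change its performance against the fixed iid null $\rho_{AB}^{\otimes n}$ and, by permutation invariance of the factors $\sigma_{A^n}$, $\sigma_{B^n}$, only helps against the alternative), and then apply a de Finetti reduction of the form~\cite{Hayashi2009,christandl09} to bound any permutation-invariant $\sigma_{A^n}$ (resp.\ $\sigma_{B^n}$) by a $\poly(n)$ factor times an average of iid states $\int\tau_A^{\otimes n}\,\mathrm{d}\omega(\tau)$ (resp.\ $\int\tau_B^{\otimes n}\,\mathrm{d}\omega'(\tau)$). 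This reduces the worst-case Type~2 error over $\cT_{A^n:B^n}$ to $\poly(n)$ times the worst case over iid product states $\tau_A^{\otimes n}\otimes\tau_B^{\otimes n}$, which is then handled by running the argument above uniformly in $\tau$ and taking an infimum over the single-letter state $\sigma$ at the end — giving the claimed $\inf_{\sigma\in S(\cH)}D_s(\rho_{AB}\|\sigma_A\otimes\sigma_B)$ with the stated $\tfrac{1}{n}\cdot\tfrac{s}{1-s}\log\tfrac1\eps$ and $\tfrac{\log\poly(n)}{n}$ correction terms.

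I expect the main obstacle to be the interplay between the two de Finetti reductions (one on the $A$-system, one on the $B$-system) and the fact that the constraint in $\cT_{A^n:B^n}$ is a disjunction ($\sigma_{A^n}\vee\sigma_{B^n}$ permutation invariant), so only one of the two factors is guaranteed symmetric. The cleanest route is to symmetrize $M_n$ with respect to the simultaneous $A^nB^n$ permutation action, observe that this suffices to make the relevant quantity depend on the non-symmetric factor only through its symmetrization, and then apply the de Finetti bound to whichever factor is permutation invariant while absorbing the other using the operator inequality under the trace against the positive operator $M_n$. Once the reduction to iid product alternatives is in place, the remaining steps are the routine Audenaert estimate and an application of Eq.~\eqref{eq:mutualinfo_identity} in the $s\to1$ limit, so the bookkeeping of the $\poly(n)$ de Finetti factors is really the only delicate point.
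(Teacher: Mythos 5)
Your overall template (Neyman--Pearson test, two applications of Audenaert's inequality, symmetrization plus de Finetti to tame the non-convex alternative set) is the right one, but there is a genuine gap in where you put the de Finetti reduction, and it breaks the argument. You build the test from a chosen iid product, $M_n(\lambda_n)=\{\rho_{AB}^{\otimes n}-2^{\lambda_n}\sigma_A^{\otimes n}\otimes\sigma_B^{\otimes n}\}_+$. Audenaert's inequality (Lem.~\ref{lem:audenaert}) only controls the trace of $M_n$ against the \emph{same} operator $Y$ that appears inside the positive part, i.e.\ against $\sigma_A^{\otimes n}\otimes\sigma_B^{\otimes n}$. After you symmetrize and apply the de Finetti reduction, the worst-case Type~2 error over $\cT_{A^n:B^n}$ is reduced to $\poly(n)\cdot\tr[M_n\,\omega_{A^n}\otimes\omega_{B^n}]$ with $\omega_{A^n}$ the universal symmetric (de Finetti) state --- and nothing bounds this quantity, because the operator inequality runs in the wrong direction: permutation-invariant states are dominated by $\poly(n)\,\omega_{A^n}$, but $\omega_{A^n}$ is \emph{not} dominated by $\poly(n)\,\sigma_A^{\otimes n}$ for your chosen $\sigma$. ``Running the argument uniformly in $\tau$'' does not help, since the test is a single fixed operator and cannot be re-tailored to each $\tau$ appearing in the de Finetti average. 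The paper's fix is to put the universal states into the test from the start, $M_{A^nB^n}(\lambda_n)=\{\rho_{AB}^{\otimes n}-2^{\lambda_n}\omega_{A^n}\otimes\omega_{B^n}\}_+$ with $\omega_{A^n}\propto\tr_{\tilde A^n}[P^{\mathrm{Sym}}_{A^n\tilde A^n}]$; then the Type~1 bound still connects to $\inf_{\sigma\otimes\sigma\in\cT_n}D_s$ because $\omega_{A^n}\otimes\omega_{B^n}$ is itself an element of $\cT_n$, and the Type~2 bound follows from Schur--Weyl domination plus a single further application of Audenaert to the pair $(\rho_{AB}^{\otimes n},\omega_{A^n}\otimes\omega_{B^n})$.

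A second, related omission: once the test is built from $\omega_{A^n}\otimes\omega_{B^n}$, the exponent you obtain is the genuinely multi-letter quantity $\frac{1}{n}\inf_{\sigma\otimes\sigma\in\cT_n}D_s(\rho_{AB}^{\otimes n}\|\sigma_{A^n}\otimes\sigma_{B^n})$, with non-iid factors $\sigma_{A^n},\sigma_{B^n}$. Reducing this to the single-letter $\inf_{\sigma\in S(\cH)}D_s(\rho_{AB}\|\sigma_A\otimes\sigma_B)$ is not routine bookkeeping; it is the additivity of the R\'enyi quantum mutual information $I_s(A:B)_\rho$ on tensor products, which the paper proves via the quantum Sibson identity of Eq.~\eqref{sibson}. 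In your version this step is invisible only because you fixed an iid $\sigma$ at the outset --- which is exactly the choice that makes the de Finetti step fail. So the two halves of your proposal cannot both hold: either the test is iid-tailored (and the reduction to arbitrary alternatives breaks), or it is universal (and you must supply the Sibson-identity additivity argument you currently omit).
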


\begin{proof}
Without loss of generality assume that $\sigma_{A^n}$ is permutation invariant. We choose
\begin{align}
M_{A^nB^n}(\lambda_n):=\left\{\rho_{AB}^{\otimes n}-2^{\lambda_n}\omega_{A^n}\otimes\omega_{B^n}\right\}_+\quad \nonumber\\ 
\mathrm{with}\quad\omega_{A^n}:={n+|A|^2-1\choose n}^{-1}\tr_{\tilde{A}^n}\left[P^{\mathrm{Sym}}_{A^n\tilde{A}^n}\right]\,,
\end{align}
where $P^{\mathrm{Sym}}_{A^n\tilde{A}^n}$ denotes the projector onto the symmetric subspace of $\cH_A^{\otimes n}\otimes\cH_{\tilde{A}}^{\otimes n}$ with $|A|=|\tilde{A}|$ (denoting the dimension of $\cH_A$ by $|A|$), and similarly for $B^n$. Audenaert's inequality (Lemma \ref{lem:audenaert}) gives that
\begin{align}
\tr\left[(1-M_{A^nB^n}(\lambda_n))\rho^{\otimes n}_{AB}\right]&\leq2^{(1-s)\lambda_n}\tr\left[\left(\rho^{\otimes n}_{AB}\right)^s\left(\omega_{A^n}\otimes\omega_{B^n}\right)^{1-s}\right] \nonumber \\ &\leq2^{(1-s)\left(\lambda_n-\inf_{\sigma_{A^n}\otimes\sigma_{B^n}\in\cT_n}D_s\left(\rho^{\otimes n}_{AB}\middle\|\sigma_{A^n}\otimes\sigma_{B^n}\right)\right)}\,.
\end{align}
Furthermore, we have by Schur-Weyl duality that $\sigma_{A^n}\leq{n+|A|^2-1\choose n}\,\omega_{A^n}$ for all permutation invariant $\sigma_{A^n}$ (see, e.g., \cite[Lemma 1]{HT14}) and thus again by Audenaert's inequality (Lemma \ref{lem:audenaert})
\begin{align}
&\tr\left[M_{A^nB^n}(\lambda_n)\left(\sigma_{A^n}\otimes\sigma_{B^n}\right)\right] \\ 
&=\tr\left[M_{A^nB^n}(\lambda_n)\left(\sigma_{A^n}\otimes\left(\sum_{\pi\in S_n}U_{B^n}(\pi)\sigma_{B^n}U_{B^n}^\dagger(\pi)\right)\right)\right]\;\text{($S_n$: symm. group)}\notag\\
&\leq\underbrace{{n+|A|^2-1\choose n}{n+|B|^2-1\choose n}}_{=:\;p(n)\;\leq\;\poly(n)}\tr\left[M_{A^nB^n}(\lambda_n)\left(\omega_{A^n}\otimes\omega_{B^n}\right)\right]\notag\\
&\leq p(n)\cdot2^{-s\lambda_n}\tr\left[\left(\rho^{\otimes n}_{AB}\right)^s\left(\omega_{A^n}\otimes\omega_{B^n}\right)^{1-s}\right]\notag\\
&\leq p(n)\cdot2^{-s\lambda_n-(1-s)\inf_{\sigma_{A^n}\otimes\sigma_{B^n}\in\cT_n}D_s\left(\rho^{\otimes n}_{AB}\middle\|\sigma_{A^n}\otimes\sigma_{B^n}\right)}\,.\label{eq:mutual_error-last}
\end{align}
We now choose
\begin{align}
\text{$\lambda_n:=\inf_{\sigma_{A^n}\otimes\sigma_{B^n}\in\cT_n}D_s\left(\rho^{\otimes n}_{AB}\middle\|\sigma_{A^n}\otimes\sigma_{B^n}\right)+\log\eps^{\frac{1}{1-s}}$ with $M_{A^nB^n}:=M_{A^nB^n}(\lambda_n)$,}
\end{align}
from which we get $\tr\left[M_{A^nB^n}\rho_{AB}^{\otimes n}\right]\geq1-\eps$ and together with Equation \eqref{eq:error_mutual} and Equation \eqref{eq:mutual_error-last} that
\begin{align}
\zeta_{A:B}^n(\eps)\geq\inf_{\sigma_{A^n}\otimes\sigma_{B^n}\in\cT_n}D_s\left(\rho_{AB}^{\otimes n}\middle\|\sigma_{A^n}\otimes\sigma_{B^n}\right)-\frac{1}{n}\frac{s}{1-s}\log\frac{1}{\eps}-\frac{\log p(n)}{n}\,.
\end{align}
To deduce the claim it is now sufficient to argue that the R\'enyi quantum mutual information\footnote{This definition is slightly different from the R\'enyi quantum mutual information discussed in~\cite{HT14}.}
\begin{align}
I_s(A:B)_\rho:=\inf_{\sigma_A\otimes\sigma_B\in S(\cH)}D_s\left(\rho_{AB}\middle\|\sigma_A\otimes\sigma_B\right)
\end{align}
is additive on tensor product states. This, however, follows exactly as in the classical case~\cite[App.~A-C]{TH15} from the (quantum) Sibson identity~\cite[Lemma 3]{SW13}
\begin{align}\label{sibson}
D_s\left(\rho_{AB}\middle\|\sigma_A\otimes\sigma_B\right)=D_s\left(\rho_{AB}\middle\|\sigma_A\otimes\bar{\sigma}_B\right)+D_s\left(\bar{\sigma}_B\middle\|\sigma_B\right)\quad\\ 
\mathrm{with}\quad\bar{\sigma}_B:=\frac{\left(\tr_A\left[\rho_{AB}^s\sigma_A^{1-s}\right]\right)^\frac{1}{s}}{\tr\left[\left(\tr_A\left[\rho_{AB}^s\sigma_A^{1-s}\right]\right)^\frac{1}{s}\right]}\,.\nonumber
\end{align}
\end{proof}

Taking the limit $n\to\infty$ in Lemma \ref{eq:mutual-new} and then taking the limit $s\to1$ via the quantum Sibson identity from Equation~\eqref{sibson} and Equation~\eqref{eq:mutualinfo_identity} yields
\begin{align}
\lim_{s\to1}I_s(A:B)_\rho=I(A:B)_\rho\,,
\end{align}
gives the claimed achievability $\zeta_{A:B}(\infty,0)\geq I(A:B)_\rho$. A weak converse for $\eps\to0$ follows as in Lemma \ref{lem:converse_lemma} and the strong converse as claimed in Proposition \ref{prop:quantum_mutual} is derived similarly as in Proposition \ref{thm:convex_achievability}\,---\,by noting that it is sufficient to prove a converse for testing
\begin{align}
\text{$\rho_{AB}^{\otimes n}$ against $\rho_A^{\otimes n}\otimes\sigma_{B^n}$.}
\end{align}


A more refined analysis of the above calculation along the work~\cite{HT14} allows to determine the Hoeffding bound for the product testing discrimination problem as above. However, for the strong converse exponent we are missing the additivity of the sandwiched R\'enyi quantum mutual information
\begin{align}
\tilde{I}_s(A:B)_\rho:=\inf_{\sigma_ A\otimes\sigma_B\in S(\cH)}\tilde{D}_s\left(\rho_{AB}\middle\|\sigma_A\otimes\sigma_B\right)
\end{align}
on product states.


\section{Conditional quantum mutual information}\label{sec:recovery}

Here, we discuss how our results are related to the conditional quantum mutual information. This allows us to show that the regularization in our formula for composite asymmetric hypothesis testing as stated in Theorem \ref{thm:main} is needed in general.


\subsection{Recoverability bounds}

The following is a proof of the lower bound on the conditional quantum mutual information from Equation \eqref{eq:cqmi_lowerbound}.

\begin{theorem}\label{thm:cqmi}
For $\rho_{ABC}\in S(\cH_{ABC})$ we have
\begin{align}\label{eq:cqmi}
I(A:B|C)_\rho\geq\limsup_{n\to\infty}\frac{1}{n}D\Big(\rho_{ABC}^{\otimes n}\Big\|\int\beta_0(t)\left(\cI_A\otimes\cR^{[t]}_{C\to BC}(\rho_{AC})\Big)^{\otimes n}\mathrm{d}t\right)\,,
\end{align}
where $\cR^{[t]}_{C\to BC}(\cdot):=\rho_{BC}^{\frac{1+it}{2}}\big(\rho_C^{\frac{-1-it}{2}}(\cdot)\rho_C^{\frac{-1+it}{2}}\big)\rho_{BC}^{\frac{1-it}{2}}$ with the inverses understood as generalized inverses and $\beta_0(t):=\frac{\pi}{2}\left(\cosh(\pi t)+1\right)^{-1}$.
\end{theorem}

\begin{proof}
We start from the lower bound~\cite[Theorem 4.1]{SBT16} applied to $\rho_{ABC}^{\otimes n}$ (with the support conditions taken care of as in the corresponding proof)
\begin{align}
I(A:B|C)_\rho=\frac{1}{n}I\left(A^n:B^n\middle|C^n\right)_{\rho^{\otimes n}}\geq\frac{1}{n}D_{\mathcal{M}}\left(\rho_{ABC}^{\otimes n}\middle\|\sigma_{A^nB^nC^n}\right)
\end{align}
with
\begin{align}\label{eq:sigma}
\text{$\sigma_{A^nB^nC^n}:=\int\beta_0(t)\left(\sigma_{ABC}^{[t]}\right)^{\otimes n}\mathrm{d}t$ and $\sigma_{ABC}^{[t]}:=\left(\cI_A\otimes \cR^{[t]}_{C\to BC}\right)(\rho_{AC})$,}
\end{align}
where we have used that the conditional quantum mutual information is additive on product states. Now, we simply observe that $\sigma_{A^nB^nC^n}$ is permutation invariant and hence the claim can be deduced from Lemma \ref{pinching} together with taking the limit superior $n\to\infty$.
\end{proof}

Together with previous work we find the following corollary that encompasses all known recoverability lower bounds on the conditional quantum mutual information.

\begin{corollary}\label{cor:cqmi}
For $\rho_{ABC}\in S(\cH_{ABC})$ the conditional quantum mutual information $I(A:B|C)_\rho$ is lower bounded by
\begin{align}
&-\int\beta_0(t)\log\Big\|\sqrt{\rho_{ABC}}\sqrt{\sigma_{ABC}^{[t]}}\Big\|_1^2\;\mathrm{d}t\\
&D_{\mathcal{M}}\Big(\rho_{ABC}\Big\|\int\beta_0(t)\sigma_{ABC}^{[t]}\;\mathrm{d}t\Big)\\
&\limsup_{n\to\infty}\frac{1}{n}D\Big(\rho_{ABC}^{\otimes n}\Big\|\int\beta_0(t)\big(\sigma_{ABC}^{[t]}\big)^{\otimes n}\mathrm{d}t\Big)
\end{align}
with $\sigma_{ABC}^{[t]}$ from Equation~\eqref{eq:sigma}.
\end{corollary}

The first bound was shown in~\cite[Section 3]{JRSWW15}, the second one in~\cite[Theorem 4.1]{SBT16}, and the third one is Theorem \ref{thm:cqmi}. We note that the lower bounds are typically strict in the non-commutative case, as can be seen from numerical work (see, e.g., \cite{BHOS15}). In contrast to the second and third bound, the first lower bound is not tight in the commutative case but has the advantage that the average over $\beta_0(t)$ stands outside of the distance measure used. Moreover, the distribution $\beta_0(t)$ cannot be taken outside the relative entropy measure in the second and the third bound, since quantum Stein's lemma would then lead to a contradiction to a recent counterexample from~\cite[Section 5]{FF17}. Namely, there exists $\theta\in\left[0,\pi/2\right]$ such that
\begin{align}
&I(A:B|C)_\rho\ngeq\inf_{\cR}D\left(\rho_{ABC}\middle\|(\cI_A\otimes\cR_{C\to BC})(\rho_{AC})\right)\;\label{eq:example_fawzi} \end{align}
for the pure state $\rho_{ABC}=|\rho\rangle\langle\rho|_{ABC}$ with
\begin{align}
|\rho\rangle_{ABC}=&\frac{1}{\sqrt{2}}\big(\cos(\theta)|0\rangle_A\otimes|1\rangle_C+\sin(\theta)|1\rangle_A\otimes|0\rangle_C\big)\otimes|1\rangle_B\notag\\
&+\frac{1}{\sqrt{2}}|0\rangle_A\otimes|0\rangle_B\otimes|0\rangle_C\,.
\end{align}
It seems that the only remaining conjectured strengthening is the lower bound in terms of the non-rotated Petz map~\cite[Section 8]{bertawilde14}
\begin{align}
I(A:B|C)_\rho\geq-\log\Big\|\sqrt{\rho_{ABC}}\sqrt{\sigma_{ABC}^{[0]}}\Big\|_1^2\,.
\end{align}
We refer to~\cite{Lemm17} for the latest progress in that direction.

The arguments in this section can also be applied to lift the strengthened monotonicity from~\cite[Corollary 4.2]{SBT16}. For $\rho\in S(\cH)$, $\sigma$ a positive semi-definite operator on $\cH$, and $\mathcal{N}$ a completely positive trace preserving map on the same space this leads to
\begin{align}
D(\rho\|\sigma)-D(\mathcal{N}(\rho)\|\mathcal{N}(\sigma))\geq\limsup_{n\to\infty}\frac{1}{n}D\Big(\rho^{\otimes n}\Big\|\int\beta_0(t)\left(\cR^{[t]}_{\sigma,\mathcal{N}}(\rho)\right)^{\otimes n}\mathrm{d}t\Big),
\end{align}
where $\cR^{[t]}_{\sigma,\mathcal{N}}(\cdot):=\sigma^{\frac{1+it}{2}}\mathcal{N}^{\dagger}\left(\mathcal{N}(\sigma)^{\frac{-1-it}{2}}(\cdot)\mathcal{N}(\sigma)^{\frac{-1+it}{2}}\right)\sigma^{\frac{1-it}{2}}$. Together with~\cite[Section 3]{JRSWW15} and~\cite[Corollary 4.2]{SBT16} we then again have the three lower bounds as in Corollary \ref{cor:cqmi}.


\subsection{Regularization necessary}

Here, we use our bound on the conditional quantum mutual information (Theorem \ref{thm:cqmi}) to show that the regularization in Theorem \ref{thm:main} is in general needed (see also~\cite{BDKSSS05}). That is, we give a proof for Equation \eqref{eq:reg_needed}. Namely, by Theorem \ref{thm:cqmi} we have\footnote{Alternatively, we could employ the implicitly stated bound \cite[Equation 38]{BHOS15}.}
\begin{align}
I(A:B|C)_\rho&\geq\limsup_{n\to\infty}\frac{1}{n}D\Big(\rho_{ABC}^{\otimes n}\Big\|\int\beta_0(t)\left(\cI_A\otimes\cR^{[t]}_{C\to BC}(\rho_{AC})\right)^{\otimes n}\mathrm{d}t\Big)\\
&\geq\lim_{n\to\infty}\frac{1}{n}\inf_{\mu\in\cR}D\Big(\rho_{ABC}^{\otimes n}\Big\|\int\left(\cI_A\otimes\cR_{C\to BC}\left(\rho_{AC}\right)\right)^{\otimes n}\mathrm{d}\mu(\cR)\Big)\,.
\end{align}
From the second composite discrimination problem described in Section \ref{sec:hypothesis_recovery} we see that the latter quantity is equal to the asymptotic error exponent $\zeta_{\bar{\cR}}(\infty,0)$ as given in Equation~\eqref{Eq:RelRecovExpo} for testing
\begin{align}
\text{$\rho_{ABC}^{\otimes n}$ against $\int\left((\cI_A\otimes\cR_{C\to BC})(\rho_{AC})\right)^{\otimes n}\;\mathrm{d}\mu(\cR)$.}
\end{align}
Now, if the regularization in the asymptotic formula for $\zeta_{\bar{\cR}}(\infty,0)$ would actually not be needed this would imply that
\begin{align}
I(A:B|C)_\rho\geq\inf_{\cR}D\left(\rho_{ABC}\|(\cI_A\otimes\cR_{C\to BC})(\rho_{AC})\right)\,.
\end{align}
However, this is in contradiction with the counterexample from~\cite[Section 5]{FF17} as discussed in Equation \eqref{eq:example_fawzi}. Hence, we conclude that the regularization for composite asymmetric quantum hypothesis testing is needed in general. \qed


\section{Conclusion}\label{sec:discussion}

We extended quantum Stein's lemma in asymmetric quantum hypothesis testing by showing that the optimal asymptotic error exponent for testing convex combinations of quantum states $\rho^{\otimes n}$ against convex combinations of quantum states $\sigma^{\otimes n}$ is given by a regularized quantum relative entropy formula which does not become single-letter in general. Moreover, we gave various examples when our formula as well as extensions thereof become single-letter. It remains interesting to find more non-commutative settings that allow for single-letter solutions.

Another related problem is that of symmetric hypothesis testing, where it is well-known that in the case of fixed iid states $\rho^{\otimes n}$ against $\sigma^{\otimes n}$ the optimal asymptotic error exponent is given by the quantum Chernoff bound~\cite{ACMBMAV07,NS09}
\begin{align}
C(\rho,\sigma):=\sup_{0\leq s\leq1}-\log\tr\left[\rho^s\sigma^{1-s}\right]\,.
\end{align}
For this symmetric setting, it was conjectured in~\cite{AM14} that for finite sets $\cS$ and $\cT$ the corresponding composite asymptotic error exponent is given by
\begin{align}\label{eq:conjCB}
C(\cS,\cT):=\inf_{\substack{\rho\in\cS\\\sigma\in\cT}}C(\rho,\sigma)\,,
\end{align}
with definitions analogue to those given earlier for the asymmetric setting. However, it was recently shown that already in the setting of a fixed null hypothesis $\cS=\{\rho\}$ above conjecture is in general false~\cite{mosonyi2020error}.\footnote{See, however, \cite{Li16} for a related problem that does allow for an exact single-letter characterisation.}

Moreover, one can again consider testing convex combinations of iid states $\rho^{\otimes n}$ with $\rho\in\cS$ against convex combinations of iid states $\sigma^{\otimes n}$ with $\sigma\in\cT$. Similarly to our work for the asymmetric setting, we then have that the following rate for the asymptotic error exponent is achievable (assuming that the limit exists)
\begin{align}\label{eq:symmetric_regularized}
\sup_{0\leq s\leq 1} \lim_{n\to\infty}\frac{1}{n}\inf_{\substack{\nu\in\cS\\\mu\in\cT}}-\log\tr\Big[\Big(\int\rho^{\otimes n}\;\mathrm{d}\nu(\rho)\Big)^s\Big(\int\sigma^{\otimes n}\;\mathrm{d}\mu(\sigma)\Big)^{1-s}\Big]\,.
\end{align}
However, it was already shown in~\cite{hiai2009quantum} that this does in general not simplify to the single-letter form in Equation \eqref{eq:conjCB}. We refer to~\cite[Section I]{mosonyi2020error} for an excellent overview of the recent progress on composite hypothesis testing.

Finally, we note that finding single-letter achievability results for composite hypothesis testing problems has important applications in network quantum Shannon theory~\cite[Section 5.2]{QWW17}.


\section*{Acknowledgements} 
We thank an anonymous referee for extensive feedback and pointing out detailed solutions to multiple errors in previous versions of this manuscript. This work was completed prior to MB and FB joining the AWS Center for Quantum Computing. CH acknowledges support from the VILLUM FONDEN via the QMATH Centre of Excellence (Grant no. 10059), the Spanish MINECO, project FIS2013-40627-P, FIS2016-80681-P (AEI/FEDER, UE) and FPI Grant No.~BES-2014-068888, as well as by the Generalitat de Catalunya, CIRIT project no.~2014-SGR-966.


\appendix

\section{Some Lemmas}

Here, we give several lemmas that are used in the main part. We start with Sion's minimax theorem \cite{sion58}.

\begin{lemma}\label{Sion}
Let $X$ be a compact convex subset of a linear topological space and $Y$ a convex subset of a linear topological space. If a real-valued function on $X\times Y$ is such that
\begin{itemize}
\item[$\diamond$] $f(x,\cdot)$ is upper semi-continuous and quasi-concave on $Y$ for every $x\in X$
\item[$\diamond$] $f(\cdot,y)$ is lower semi-continuous and quasi-convex on $X$ for every $y\in Y$\,,
\end{itemize}
then we have
\begin{align}
\min_{x\in X} \sup_{y\in Y} f(x,y) = \sup_{y\in Y} \min_{x\in X} f(x,y)\,.
\end{align}
\end{lemma}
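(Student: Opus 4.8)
The plan is to split the equality into its two one-sided inequalities. The bound $\sup_{y\in Y}\min_{x\in X}f(x,y)\le\min_{x\in X}\sup_{y\in Y}f(x,y)$ is the elementary ``weak'' direction and needs no structure: for every $x_0\in X$, $y_0\in Y$ one has $\min_x f(x,y_0)\le f(x_0,y_0)\le\sup_y f(x_0,y)$; taking $\sup$ over $y_0$ and then $\min$ over $x_0$ (attained since $x\mapsto\sup_y f(x,y)$ is lower semi-continuous on the compact set $X$) gives it. All the content is in the reverse inequality $\min_x\sup_y f(x,y)\le\sup_y\min_x f(x,y)$, which I would prove by contradiction: assume there is a real $c$ with $\sup_y\min_x f(x,y)<c<\min_x\sup_y f(x,y)$, and for $y\in Y$ set $C_y:=\{x\in X:f(x,y)\le c\}$. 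Lower semi-continuity of $f(\cdot,y)$ makes $C_y$ closed (hence compact), quasi-convexity makes it convex, and $\min_x f(x,y)\le\sup_{y'}\min_x f(x,y')<c$ makes it nonempty. Any point of $\bigcap_{y\in Y}C_y$ satisfies $\sup_y f(x,y)\le c<\min_x\sup_y f(x,y)$, absurd; so it suffices to show $\bigcap_{y}C_y\neq\emptyset$, and by compactness of $X$ it is enough to prove the finite intersection property, i.e.\ $\min_x\max_{1\le i\le n}f(x,y_i)\le c$ for every finite $y_1,\dots,y_n\in Y$, using the standing fact that $\min_x f(x,y)<c$ for \emph{every} $y\in Y$.

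This finite statement I would prove by induction on $n$, following Sion. The case $n=1$ is immediate. The engine of the induction is a two-point lemma: if $\min_x f(x,y)\le c$ for all $y$ on a segment $[y_0,y_1]\subseteq Y$, then $\min_x\max\bigl(f(x,y_0),f(x,y_1)\bigr)\le c$. In the inductive step one passes from $n$ to fewer points by replacing a pair $y_{n-1},y_n$ by a suitable point on the segment $[y_{n-1},y_n]$, the quasi-concavity of $f(x,\cdot)$ guaranteeing that the sub-level set at the new point still sits inside the union of the sub-level sets at $y_{n-1}$ and $y_n$ (and using that a finite max of lower semi-continuous quasi-convex functions of $x$ is again such); carrying the bookkeeping through is routine but not entirely painless, and is where Sion's original argument lives.

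The main obstacle is the two-point lemma itself, where one must work with mere lower semi-continuity and quasi-convexity/quasi-concavity rather than continuity and bilinearity --- this is exactly the gap between Sion's theorem and von Neumann's. I would prove it by a connectedness argument on the segment parameter. Suppose $\delta:=\min_x\max(f(x,y_0),f(x,y_1))>c$ and fix $c'\in(c,\delta)$. The sets $A_i:=\{x:f(x,y_i)\le c'\}$ for $i=0,1$ are closed, convex, nonempty (as $\min_x f(x,y_i)\le c<c'$) and disjoint (a common point would make the max at most $c'<\delta$). For $\lambda\in[0,1]$ put $y_\lambda:=(1-\lambda)y_0+\lambda y_1\in Y$; then $D_\lambda:=\{x:f(x,y_\lambda)\le c\}$ is closed, convex --- hence connected --- and nonempty by hypothesis, and quasi-concavity gives $f(x,y_\lambda)\ge\min(f(x,y_0),f(x,y_1))$, so $D_\lambda\subseteq A_0\cup A_1$. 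Connectedness then forces $D_\lambda\subseteq A_0$ or $D_\lambda\subseteq A_1$ (exactly one, by disjointness and nonemptiness), giving a partition $[0,1]=\Lambda_0\sqcup\Lambda_1$ with $\Lambda_j:=\{\lambda:D_\lambda\subseteq A_j\}$; one checks $0\in\Lambda_0$ and $1\in\Lambda_1$ (using $c<c'$), so both are nonempty, and a semi-continuity argument --- the delicate step, and the place where the hypotheses must be used most carefully --- shows both are relatively closed in $[0,1]$, contradicting its connectedness. Stitching the three layers together proves the theorem. (An alternative organisation routes the finite intersection property through the Knaster--Kuratowski--Mazurkiewicz lemma; and in every application of Lem.~\ref{Sion} in this paper $f$ is in fact jointly continuous, indeed usually bilinear, on a product of compact convex subsets of finite-dimensional spaces, so there the classical von Neumann proof or a direct supporting-hyperplane argument already suffices.)
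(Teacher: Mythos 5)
The paper does not prove this lemma at all---it is imported verbatim from Sion's 1958 paper---so there is no internal proof to compare against. What you have written is a reconstruction of the standard elementary argument (essentially Komiya's 1988 streamlining of Sion's original proof), and its architecture is correct: weak duality; reduction of the nontrivial inequality to the finite intersection property of the sublevel sets $C_y$ via compactness of $X$; reduction of the finite case to the two-point lemma by induction; and the two-point lemma by connectedness of the segment. Two steps are asserted rather than proved, and you correctly flag both. The induction step is standard but does require restricting the domain to a sublevel set $\{x: f(x,y_{n+1})\le c\}$ (compact and convex by lower semi-continuity and quasi-convexity) before invoking the inductive hypothesis, and only then applying the two-point lemma to the resulting pair.

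The genuinely delicate point is the one you name: relative closedness of $\Lambda_0$ and $\Lambda_1$. Your setup does support it, but it needs one more object, namely the \emph{level-$c'$} sublevel sets along the segment. To see that $\Lambda_0$ is closed, take $\lambda_n\in\Lambda_0$ with $\lambda_n\to\lambda$ and any $x\in D_\lambda$, so $f(x,y_\lambda)\le c<c'$. Upper semi-continuity of $f(x,\cdot)$ along the segment gives $f(x,y_{\lambda_n})<c'$ for large $n$, i.e.\ $x\in D'_{\lambda_n}:=\{x':f(x',y_{\lambda_n})\le c'\}$. This set is convex (hence connected), contained in $A_0\cup A_1$ by the same quasi-concavity argument you used for $D_\lambda$, and contains the nonempty set $D_{\lambda_n}\subseteq A_0$; hence $D'_{\lambda_n}\subseteq A_0$ and therefore $x\in A_0$. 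Since $D_\lambda$ lies entirely in one of the two disjoint closed sets, $D_\lambda\subseteq A_0$, i.e.\ $\lambda\in\Lambda_0$. This two-level trick (comparing the level-$c$ set at $y_\lambda$ with the level-$c'$ sets at $y_{\lambda_n}$) is exactly what lets mere semi-continuity replace continuity, and it is the only real content hiding behind the phrase ``a semi-continuity argument.'' Your closing remark is also apt: in the paper's direct invocation of this lemma in Eq.~\eqref{eq:sion_applied} the function $(M_n,\sigma_n)\mapsto\tr\left[M_n\sigma_n\right]$ is bilinear on compact convex subsets of a finite-dimensional space, so von Neumann's theorem or a separating-hyperplane argument already suffices there; the extra generality of Sion is only exercised inside the cited Lem.~\ref{applySion} for the measured relative entropy.
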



The following is a special case of \cite[Lemma 13]{BHLP14}, which is based on a more involved minimax theorem taking into account the possibility that the relative entropy can be infinite.

\begin{lemma}\label{applySion}
Let $\cS,\cT\subseteq S(\cH)$ be closed, convex sets. Then, we have
\begin{align}
\min_{\substack{\rho\in\cS\\ \sigma\in\cT}}D_{\mathcal{M}}(\rho\|\sigma)=\sup_{(\mathcal{X},\mathcal{M})}\min_{\substack{\rho\in\cS\\ \sigma\in\cT}}D\Big(\sum_{x\in\mathcal{X}}\tr\left[M_x\rho\right]|x\rangle\langle x|\Big\|\sum_{x\in\mathcal{X}}\tr\left[M_x\rho\right]|x\rangle\langle x|\Big)\,.
\end{align}
\end{lemma}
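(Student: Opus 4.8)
This is \cite[Lem.~20]{BHLP14}; the plan is to reproduce its proof, whose engine is Sion's minimax theorem (Lem.~\ref{Sion}) run against a variational formula for the measured relative entropy. Note first that the inequality ``$\ge$'' in the claim is automatic, since $\inf\sup\ge\sup\inf$ always and $D_{\mathcal M}(\rho\|\sigma)=\sup_{(\mathcal X,M)}D(\mathcal M(\rho)\|\mathcal M(\sigma))$ by definition; the work is all in the reverse bound, which the minimax in fact delivers as an equality.

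Step one is the variational representation
$$D_{\mathcal M}(\rho\|\sigma)=\sup_{A=A^\dagger}\Big(\tr[\rho A]-\log\tr[\sigma e^{A}]\Big),$$
with the supremum over Hermitian operators on $\cH$. For ``$\le$'' I would fix a POVM $(\mathcal X,M)$, apply the classical Donsker--Varadhan formula $D(P\|Q)=\sup_{f}\big(\sum_x P(x)f(x)-\log\sum_x Q(x)e^{f(x)}\big)$ to $P=\mathcal M(\rho)$, $Q=\mathcal M(\sigma)$, and then invoke the operator Jensen inequality $\sum_x f(x)M_x\le\log\!\big(\sum_x e^{f(x)}M_x\big)$ (operator concavity of $\log$, using $\sum_x M_x=\id$) together with $\tr[\sigma\sum_x e^{f(x)}M_x]=\tr[\sigma\,\omega]$ to bound each term by $\tr[\rho\log\omega]-\log\tr[\sigma\omega]$ with $\omega:=\sum_x e^{f(x)}M_x>0$, i.e.\ by the claimed expression at $A=\log\omega$. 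For ``$\ge$'': given Hermitian $A$ with spectral decomposition $A=\sum_x a_x|e_x\rangle\langle e_x|$, the rank-one projective measurement $\{|e_x\rangle\langle e_x|\}$ together with test function $f=(a_x)_x$ realises $\tr[\rho A]-\log\tr[\sigma e^{A}]$ inside the Donsker--Varadhan supremum (cf.\ also the reduction to rank-one projective measurements in~\cite[Thm.~2]{BFT15}).

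Step two: set $g(\rho,\sigma;A):=\tr[\rho A]-\log\tr[\sigma e^{A}]$. This is real-valued and continuous on $(\cS\times\cT)\times\{A=A^\dagger\}$ (since $e^{A}>0$ forces $\tr[\sigma e^{A}]>0$), affine in $\rho$ and convex in $\sigma$ hence jointly convex in $(\rho,\sigma)$, and concave in $A$ — the concavity in $A$ resting on the convexity of $A\mapsto\log\tr[\sigma e^{A}]$, a standard operator-convexity fact and the only non-elementary ingredient. Since $\cS\times\cT$ is compact convex and the Hermitian operators form a convex set, Lem.~\ref{Sion} yields
$$\min_{\rho\in\cS,\sigma\in\cT}D_{\mathcal M}(\rho\|\sigma)=\min_{\rho,\sigma}\sup_{A}g(\rho,\sigma;A)=\sup_{A}\min_{\rho,\sigma}g(\rho,\sigma;A),$$
and undoing the parametrisation exactly as in the ``$\ge$'' half of step one identifies $\sup_{A}\min_{\rho,\sigma}g(\rho,\sigma;A)$ with $\sup_{(\mathcal X,M)}\min_{\rho,\sigma}D(\mathcal M(\rho)\|\mathcal M(\sigma))$, which is the right-hand side of the lemma.

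The crux is the minimax step. A direct application of Sion to $(\rho,\sigma)$ versus the POVM $M$ fails: the map $M\mapsto D(\mathcal M(\rho)\|\mathcal M(\sigma))$ is not concave (after Donsker--Varadhan it is bilinear-type in the pair (measurement, test function)), so one must collapse these two degrees of freedom into the single Hermitian parameter $A=\log\omega$, which is precisely what restores the concavity Sion needs, at the cost of the convexity fact for $A\mapsto\log\tr[\sigma e^{A}]$. One should also note that $A$ ranges over a noncompact set; this is permitted because Lem.~\ref{Sion} requires compactness only of the minimised domain $\cS\times\cT$, but if one prefers compactness on both sides the sup can be truncated to $\|A\|\le L$ and $L$ sent to infinity.
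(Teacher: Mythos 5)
Your overall architecture --- Donsker--Varadhan plus operator Jensen to trade the pair (POVM, test function) for a single positive operator, then Sion's minimax over $\cS\times\cT$ versus that operator --- is the right one and matches the standard proof (cf.~\cite{BHLP14,BFT15}). The $\leq$/$\geq$ reductions in your step one are fine. But the proof breaks at the one place you wave at: the concavity of $A\mapsto\tr[\rho A]-\log\tr[\sigma e^{A}]$, which you base on ``the convexity of $A\mapsto\log\tr[\sigma e^{A}]$, a standard operator-convexity fact.'' This is not a standard fact, and it is false. (The standard convexity statement is for $A\mapsto\log\tr[e^{\log\sigma+A}]$, which by Golden--Thompson is a \emph{different, smaller} function; the gap between the two is exactly the gap between $D$ and $D_{\mathcal M}$.) A concrete counterexample in two dimensions: take $\sigma=|0\rangle\langle 0|$ and write $A=\tfrac{a+c}{2}\id+\bigl(\begin{smallmatrix}d& b\\ b&-d\end{smallmatrix}\bigr)$, so that $\log\tr[\sigma e^{A}]=\tfrac{a+c}{2}+\phi(d,b)$ with $\phi(d,b)=\log\bigl[\cosh R+\tfrac{d}{R}\sinh R\bigr]$, $R=\sqrt{d^{2}+b^{2}}$. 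At $(d,b)=(1,1)$ the Hessian of $\phi$ is approximately $\bigl(\begin{smallmatrix}0.080&-0.193\\-0.193&0.386\end{smallmatrix}\bigr)$, with determinant $\approx-0.0065<0$, so it has a negative eigenvalue; one can also check midpoint convexity directly along the corresponding direction and see it fail. Hence $g(\rho,\sigma;A)$ is not concave in $A$, and Sion's theorem (which needs at least quasi-concavity in the maximised variable, something you have not established either) cannot be applied in your parametrisation.

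The fix is to use the other variational form of the measured relative entropy from~\cite{BFT15},
\begin{align}
D_{\mathcal M}(\rho\|\sigma)=\sup_{\omega>0}\bigl\{\tr[\rho\log\omega]+1-\tr[\sigma\omega]\bigr\}\,,
\end{align}
which is obtained from your $A=\log\omega$ form by undoing the optimisation over the scale of $\omega$ (the two suprema coincide because replacing $\omega$ by $c\,\omega$ and optimising $c>0$ turns $1-\tr[\sigma\omega]$ into $-\log\tr[\sigma\omega]$). The objective $\tr[\rho\log\omega]+1-\tr[\sigma\omega]$ is affine in $(\rho,\sigma)$ and, crucially, \emph{concave} in $\omega$ for genuinely standard reasons: $\omega\mapsto\tr[\rho\log\omega]$ is concave by operator concavity of the logarithm, and $\omega\mapsto-\tr[\sigma\omega]$ is linear. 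Sion's theorem (Lem.~\ref{Sion}) then applies with $X=\cS\times\cT$ compact convex and $Y=\{\omega>0\}$ convex, and your step-one dictionary between operators $\omega$ and pairs (measurement, test function) transfers the resulting identity back to $\sup_{(\mathcal X,M)}\min_{\rho,\sigma}D(\mathcal M(\rho)\|\mathcal M(\sigma))$ exactly as you describe. With that substitution the rest of your argument goes through.
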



We have the following discretization result.

\begin{lemma}\label{carat}
For every probability measure $\mu$ on the Borel $\sigma$-algebra of $\cS\subseteq S(\cH)$ with the dimension of $\cH$ given by $d$, there exists a probability distribution $\{ p_i\}_i^N$ with $N\leq (n+1)^{2d^2}$ and $\rho_i\in\cS$ such that
\begin{align}
\int \rho^{\otimes n}\;\mathrm{d}\mu(\rho)=\sum_{i=1}^N p_i\rho_i^{\otimes n}\,.
\end{align}
\end{lemma}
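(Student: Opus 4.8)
The plan is to view the map $\rho\mapsto\rho^{\otimes n}$ as taking values in the real vector space $V$ of Hermitian operators on $\cH^{\otimes n}$, and to apply the integral version of Carathéodory's theorem to the probability measure $\mu$ pushed forward along this map. First I would set $V_{\mathrm{sym}}$ to be the space of permutation-invariant Hermitian operators on $\cH^{\otimes n}$, since every $\rho^{\otimes n}$ lies in this subspace. Its real dimension is the dimension of the symmetric subspace's operator space, which by a standard stars-and-bars count is $\binom{n+d^2-1}{d^2-1}\le (n+1)^{d^2-1}$; in any case it is bounded by $(n+1)^{2d^2}$, which is all we need. The operator $\bar\rho:=\int\rho^{\otimes n}\,\mathrm{d}\mu(\rho)$ is by definition a point in the convex hull of the set $K:=\{\rho^{\otimes n}:\rho\in\cS\}\subseteq V_{\mathrm{sym}}$ (more precisely, a barycenter of a probability measure supported on $K$).

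The key step is the integral form of Carathéodory's theorem (sometimes attributed to the theory of moments, or derivable from Fenchel--Eggleston/Bunt): if a point $x$ in $\mathbb{R}^m$ is the barycenter of a Borel probability measure supported on a set $K$, then $x$ is also a finite convex combination of at most $m+1$ points of $K$. Applying this with $m=\dim_{\mathbb R} V_{\mathrm{sym}}\le (n+1)^{2d^2}-1$ to the point $\bar\rho$ and the set $K$ yields states $\rho_1,\dots,\rho_N\in\cS$ and weights $p_i\ge 0$ summing to one, with $N\le (n+1)^{2d^2}$, such that $\bar\rho=\sum_{i=1}^N p_i\rho_i^{\otimes n}$. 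This is exactly the claimed identity. One should be slightly careful that the statement of the integral Carathéodory theorem typically requires $K$ to be, say, compact (or the measure to be tight); here one can either assume $\cS$ is closed and bounded (hence compact, as $\cH$ is finite-dimensional) so that $K$ is compact, or pass to the closure and note that extreme-point representatives of $\bar\rho$ in $\mathrm{conv}(\overline K)$ can be taken in $\overline K$, then approximate — but since in all applications in this paper $\cS$ is a closed convex subset of the (compact) state space, compactness of $K$ is automatic and no approximation is needed.

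The main obstacle is purely one of citing/justifying the correct form of Carathéodory: the elementary finite version only directly discretizes a point already known to be a \emph{finite} convex combination, whereas here $\mu$ is an arbitrary measure, so one genuinely needs the measure-theoretic (barycenter) version. The cleanest route is to invoke it as a known result; alternatively, one can give a short self-contained argument: the set of barycenters of probability measures on the compact set $K$ equals $\mathrm{conv}(K)$ (Krein--Milman together with the fact that in finite dimensions $\mathrm{conv}$ of a compact set is compact), and then the ordinary finite Carathéodory theorem in dimension $m$ finishes the job with $m+1$ points. Either way the dimension bound $N\le (n+1)^{2d^2}$ follows from the crude estimate on $\dim V_{\mathrm{sym}}$, and the lemma is proved.
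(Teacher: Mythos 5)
Your argument is correct, and it reaches the stated bound by a genuinely more direct route than the paper. The paper's proof is a two-line reduction: it cites a pure-state Carath\'eodory statement from~\cite[Cor.~D.6]{BCR11} (finitely decomposing $\int|\phi\rangle\langle\phi|^{\otimes n}\,\mathrm{d}\mu(\phi)$ using that such operators live on the polynomially small symmetric subspace) and handles mixed $\rho$ by purifying on $\cH\otimes\cH'$ with $|\cH'|=d$ and tracing out $\cH'^{\otimes n}$ at the end; the doubling $d\mapsto d^2$ in that reduction is exactly where the exponent $2d^2$ originates. You instead stay with mixed states throughout and apply the barycentric form of Carath\'eodory inside the fixed-point space of the permutation action on $B(\cH)^{\otimes n}$, i.e.\ $\mathrm{Sym}^n(B(\cH))$, of real dimension $\binom{n+d^2-1}{d^2-1}\le(n+1)^{d^2-1}$. (One small caution on wording: this space is the symmetric power of the \emph{operator} space, not the algebra of operators on the symmetric subspace of $\cH^{\otimes n}$ --- the latter is a strictly smaller space that does not contain $\rho^{\otimes n}$ for mixed $\rho$ --- but the dimension formula you quote is the correct one for the space you actually defined.) Your route avoids the purification detour, gives the marginally better count $N\le(n+1)^{d^2-1}+1$, and makes explicit the one non-elementary ingredient, namely that the barycenter of a probability measure on a compact $K\subseteq\mathbb{R}^m$ lies in $\mathrm{conv}(K)$ and is therefore a convex combination of at most $m+1$ points of $K$; the paper leaves this buried in the citation. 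Your remark about non-closed $\cS$ is also fair --- the lemma is stated for arbitrary subsets and neither proof addresses this --- though for the application in Lem.~\ref{MinOut} passing to the closure is harmless.
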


\begin{proof}
The idea is to use Carath\'eodory theorem together with the smallness of the symmetric subspace. For pure states the proof from~\cite[Corollary D.6]{BCR11} applies and the general case follows immediately by considering purifications and taking the partial trace over the purifying system.
\end{proof}

The von Neumann entropy has the following almost-convexity property (besides its well-known concavity).

\begin{lemma}\label{caratentropy}
Let $\rho_i\in S(\cH)$ for $i=1,\ldots,N$ and $\{p_i\}$ be a probability distribution. Then, we have
\begin{align}
H\Big(\sum_{i=1}^N p_i\rho_i\Big)\leq \sum_{i=1}^N p_i H(\rho_i) + \log N\,.
\end{align}
\end{lemma}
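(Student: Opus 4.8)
The plan is to reduce the inequality to the standard bound on the Holevo quantity of an ensemble by the Shannon entropy of the ensemble weights. Writing $\rho:=\sum_{i=1}^N p_i\rho_i$ and discarding any indices with $p_i=0$, I would first record the elementary ensemble identity
\begin{align}
H(\rho)=-\sum_{i=1}^N p_i\tr\left[\rho_i\log\rho\right]=\sum_{i=1}^N p_i H(\rho_i)+\sum_{i=1}^N p_i D(\rho_i\|\rho)\,,
\end{align}
obtained by expanding $-\tr[\rho_i\log\rho]=H(\rho_i)+D(\rho_i\|\rho)$ and averaging; note that $\supp(\rho_i)\subseteq\supp(\rho)$ whenever $p_i>0$, so each relative entropy appearing here is finite. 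This reduces the claim to showing that the Holevo-type term $\sum_i p_i D(\rho_i\|\rho)$ is at most $\log N$.

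For that I would exploit the operator inequality $\rho\geq p_i\rho_i$ together with operator monotonicity of the logarithm, which gives $\log\rho\geq\log(p_i\rho_i)$ on $\supp(\rho_i)$ and hence $-\tr[\rho_i\log\rho]\leq-\tr[\rho_i\log(p_i\rho_i)]$. Using $\log(p_i\rho_i)=(\log p_i)\id+\log\rho_i$ on the support of $\rho_i$, this yields $D(\rho_i\|\rho)\leq D(\rho_i\|p_i\rho_i)=-\log p_i$. Averaging over $i$ then gives
\begin{align}
\sum_{i=1}^N p_i D(\rho_i\|\rho)\leq\sum_{i=1}^N p_i(-\log p_i)=H(\{p_i\})\leq\log N\,,
\end{align}
and combining with the ensemble identity finishes the proof.

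An equivalent route, if one prefers to package this as a property of a state, is to introduce the classical--quantum state $\omega_{XA}:=\sum_i p_i\proj{i}_X\otimes(\rho_i)_A$ on an $N$-dimensional register $X$ tensored with $\cH$: then $H(A)_\omega=H(\rho)$ and $H(XA)_\omega=H(\{p_i\})+\sum_i p_i H(\rho_i)$, and since $\omega_{XA}$ is separable one has $H(X|A)_\omega=H(XA)_\omega-H(A)_\omega\geq0$, so $H(\rho)\leq H(XA)_\omega\leq\log N+\sum_i p_i H(\rho_i)$. There is no serious obstacle in either version; the only points that need care are the direction of the monotonicity step (one must compare $\rho$ against the \emph{subnormalized} operator $p_i\rho_i$, not against $\rho_i$), and the bookkeeping of support conditions so that every entropic quantity is well defined.
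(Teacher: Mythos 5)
Your proof is correct and follows essentially the same route as the paper: bound the entropy of the mixture by $\sum_i p_i H(\rho_i)+H(\{p_i\})$ and then bound the Shannon entropy of the weights by $\log N$. The only difference is that the paper invokes the first inequality as an elementary entropy fact, whereas you supply its proof (twice), with the support conditions and the comparison against the subnormalized operator $p_i\rho_i$ handled correctly.
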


\begin{proof}
This follows from elementary quantum entropy inequalities (see, e.g., \cite[Chapter 11]{NC00})
\begin{align}
H\Big(\sum_{i=1}^N p_i\rho_i\Big)\leq\sum_{i=1}^N p_i H\left(\rho_i\right)+H(p_i)\leq \sum_{i=1}^N p_i H(\rho_i)+\log N\,.
\end{align}
\end{proof}


The following is a property of the quantum relative entropy \cite[Theorem 3]{GMS09}.

\begin{lemma}\label{relEntropyGasym}
Let $\mathcal{N}$ be a trace-preserving, completely positive map with $\mathcal{N}(1)=1$ (unital) and $\mathcal{N}^2=\mathcal{N}$ (idempotent). Then, the minimum relative entropy distance between $\rho\in S(\cH)$ and $\sigma\in S(\cH)$ in the image of $\mathcal{N}$ satisfies 
\begin{align}
\inf_{\sigma\in\mathrm{Im}(\mathcal{N})}D(\rho\|\sigma)=H(\mathcal{N}(\rho))-H(\rho)=D(\rho\|\mathcal{N}(\rho))\,.
\end{align}
In particular, we have for the relative entropy of coherence $D_{\cC}(\rho)=D(\rho\|\rho_{\mathrm{diag}})$, where $\rho_{\mathrm{diag}}$ denotes the state obtained from $\rho$ by deleting all off-diagonal elements.
\end{lemma}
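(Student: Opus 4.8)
\emph{Plan.} The plan is to show that $\mathcal{N}$ is necessarily the trace-preserving conditional expectation onto the finite-dimensional $*$-subalgebra $\mathcal{A}:=\mathrm{Im}(\mathcal{N})$, and then to reduce the minimization over $\sigma$ to the non-negativity of the quantum relative entropy. First I would record the structure of $\mathcal{A}$. Idempotency of $\mathcal{N}$ means $\mathcal{N}$ acts as the identity on $\mathcal{A}$, and complete positivity makes $\mathcal{N}$ $*$-preserving, so $\mathcal{A}$ is a self-adjoint operator subspace containing $\id$. For $a\in\mathcal{A}$ the Kadison--Schwarz inequality gives $\mathcal{N}(a^\dagger a)\ge\mathcal{N}(a)^\dagger\mathcal{N}(a)=a^\dagger a$, while trace preservation gives $\tr[\mathcal{N}(a^\dagger a)]=\tr[a^\dagger a]$; hence the positive semi-definite operator $\mathcal{N}(a^\dagger a)-a^\dagger a$ has trace zero and thus vanishes, so $\mathcal{N}(a^\dagger a)=a^\dagger a$, and similarly $\mathcal{N}(aa^\dagger)=aa^\dagger$. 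By Choi's characterization of the multiplicative domain this places $\mathcal{A}$ inside the multiplicative domain of $\mathcal{N}$, so that $\mathcal{N}(axb)=a\,\mathcal{N}(x)\,b$ for all $a,b\in\mathcal{A}$ and all operators $x$; in particular $ab=\mathcal{N}(ab)\in\mathcal{A}$, so $\mathcal{A}$ is a $*$-subalgebra and, being finite-dimensional, is closed under the functional calculus of its elements.

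The key step is then to show that $\mathcal{N}$ is self-adjoint for the Hilbert--Schmidt inner product. For $a\in\mathcal{A}$ and $x\in\ker\mathcal{N}$ the bimodule property yields $\mathcal{N}(a^\dagger x)=a^\dagger\mathcal{N}(x)=0$, whence $\tr[a^\dagger x]=\tr[\mathcal{N}(a^\dagger x)]=0$ by trace preservation. Thus $\mathrm{Im}(\mathcal{N})\perp\ker(\mathcal{N})$, and an idempotent with orthogonal range and kernel is an orthogonal projection, i.e.\ $\tr[\rho\,\mathcal{N}(X)]=\tr[\mathcal{N}(\rho)X]$ for all $\rho,X$ (equivalently, $\mathcal{N}$ is the trace-preserving conditional expectation onto $\mathcal{A}$).

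With this in hand the minimization collapses. Fix $\rho\in S(\cH)$ and $\sigma\in\mathcal{A}$; we may assume $\supp\rho\subseteq\supp\sigma$ since otherwise $D(\rho\|\sigma)=\infty$, and then a short support argument using the bimodule property gives $\supp\mathcal{N}(\rho)\subseteq\supp\sigma$ as well. Since $\log\sigma\in\mathcal{A}$ we have $\mathcal{N}(\log\sigma)=\log\sigma$, so self-adjointness gives $\tr[\rho\log\sigma]=\tr[\rho\,\mathcal{N}(\log\sigma)]=\tr[\mathcal{N}(\rho)\log\sigma]$ and therefore
\begin{align}
D(\rho\|\sigma)=-H(\rho)-\tr[\rho\log\sigma]=-H(\rho)-\tr[\mathcal{N}(\rho)\log\sigma]=\big(H(\mathcal{N}(\rho))-H(\rho)\big)+D\big(\mathcal{N}(\rho)\big\|\sigma\big)\,.
\end{align}
The last term is non-negative and vanishes precisely for $\sigma=\mathcal{N}(\rho)$, so taking the infimum over $\sigma\in\mathcal{A}=\mathrm{Im}(\mathcal{N})$ gives $\inf_{\sigma}D(\rho\|\sigma)=H(\mathcal{N}(\rho))-H(\rho)$, attained at $\sigma=\mathcal{N}(\rho)$; plugging $\sigma=\mathcal{N}(\rho)$ back into the displayed identity also gives $D(\rho\|\mathcal{N}(\rho))=H(\mathcal{N}(\rho))-H(\rho)$, which completes the chain of equalities. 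For the stated special case one takes $\mathcal{N}$ to be the full dephasing map $\rho\mapsto\sum_c\langle c|\rho|c\rangle\,|c\rangle\langle c|$ in the fixed basis $\{|c\rangle\}$, which is a pinching (hence completely positive), unital, trace preserving and idempotent, with image exactly $\cC$, so $D_{\cC}(\rho)=D(\rho\|\rho_{\mathrm{diag}})$.

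I expect the only genuinely non-routine point to be the self-adjointness of $\mathcal{N}$ in the second step; the argument above extracts it from combining the multiplicative-domain bimodule identity with \emph{exact} trace preservation, and complete positivity is used essentially through Kadison--Schwarz (mere positivity would not suffice). Alternatively one could invoke the known characterization of unital trace-preserving idempotent completely positive maps as conditional expectations, but the route above is short and self-contained.
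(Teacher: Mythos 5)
Your proof is correct. Note that the paper itself does not prove this lemma at all: it is imported verbatim as a citation to~\cite[Thm.~3]{GMS09}, where the statement is established for an explicitly given projection (a $G$-twirling / dephasing-type map) whose self-adjointness with respect to the Hilbert--Schmidt inner product is manifest, and the argument is then exactly your final Pythagorean step: $\log\sigma$ is invariant, so $\tr[\rho\log\sigma]=\tr[\mathcal{N}(\rho)\log\sigma]$ and $D(\rho\|\sigma)=H(\mathcal{N}(\rho))-H(\rho)+D(\mathcal{N}(\rho)\|\sigma)\geq H(\mathcal{N}(\rho))-H(\rho)$ with equality at $\sigma=\mathcal{N}(\rho)$. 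What you add, and what the abstract phrasing of the lemma actually requires, is the derivation that the hypotheses (CP, unital, trace-preserving, idempotent) already force $\mathcal{N}$ to be the trace-preserving conditional expectation onto a $*$-subalgebra: your route via Kadison--Schwarz plus exact trace preservation to get equality in the Schwarz inequality on $\mathrm{Im}(\mathcal{N})$, then Choi's multiplicative-domain theorem to get the bimodule property, and then $\mathrm{Im}(\mathcal{N})\perp\ker(\mathcal{N})$ to get Hilbert--Schmidt self-adjointness, is sound and fills a step the citation leaves implicit. The support bookkeeping you sketch also goes through, since the support projection of $\sigma\in\mathcal{A}$ is a polynomial in $\sigma$ and hence lies in $\mathcal{A}$. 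For the application in the paper (the dephasing map with image $\cC$), the general machinery is not needed and the short explicit argument suffices, but your version correctly establishes the lemma at the stated level of generality.
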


Audenaert's inequality originally used to derive the quantum Chernoff bound can be stated as follows \cite[Theorem 1]{ACMBMAV07}.

\begin{lemma}\label{lem:audenaert}
Let $X,Y\gg0$ and $s\in(0,1)$. Then, we have
\begin{align}
\tr\left[X^sY^{1-s}\right]\geq\tr\left[X\left(1-\left\{X-Y\right\}_+\right)\right]+\tr\left[Y\left\{X-Y\right\}_+\right]\,.
\end{align}
\end{lemma}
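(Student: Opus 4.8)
The plan is to prove the equivalent trace-norm form. Writing the stated right-hand side as $\tr[X]-\tr\big[(X-Y)\{X-Y\}_+\big]$ and using the elementary identity $\tr\big[(X-Y)\{X-Y\}_+\big]=\tfrac12\big(\tr[X-Y]+\|X-Y\|_1\big)$, the assertion becomes
\begin{align}
\tr\big[X^sY^{1-s}\big]\geq\tfrac12\big(\tr[X]+\tr[Y]-\|X-Y\|_1\big)\,,
\end{align}
which has the virtue of being manifestly symmetric under $(X,s)\leftrightarrow(Y,1-s)$.

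First I would pass to the spectral decompositions $X=\sum_ia_i|u_i\rangle\langle u_i|$ and $Y=\sum_jb_j|v_j\rangle\langle v_j|$ (with $a_i,b_j>0$ since $X,Y\gg0$) and introduce the bistochastic matrix $p_{ij}:=|\langle u_i|v_j\rangle|^2$. This diagonalises all three relevant quantities at once, $\tr[X^sY^{1-s}]=\sum_{ij}p_{ij}a_i^sb_j^{1-s}$, $\tr[X]=\sum_{ij}p_{ij}a_i$ and $\tr[Y]=\sum_{ij}p_{ij}b_j$, so that the scalar weighted arithmetic--geometric mean inequality $a^sb^{1-s}\geq\min(a,b)=\tfrac12(a+b-|a-b|)$ --- which disposes of the entire $s$-dependence in one stroke --- yields $\tr[X^sY^{1-s}]\geq\tfrac12(\tr[X]+\tr[Y])-\tfrac12\sum_{ij}p_{ij}|a_i-b_j|$. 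The whole statement thus reduces to the single clean inequality
\begin{align}
\sum_{ij}p_{ij}\,|a_i-b_j|\;\leq\;\|X-Y\|_1\,.
\end{align}

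The hard part will be precisely this last inequality, since it is the only place where the two non-commuting eigenbases genuinely interact: its left-hand side sees only the spectra of $X$ and $Y$ individually (glued by the overlaps $p_{ij}$), whereas the right-hand side sees the spectrum of the difference $X-Y$. I would attack it through the duality $\|X-Y\|_1=\max\{\tr[H(X-Y)]:H=H^\dagger,\ \|H\|_\infty\leq1\}$, trying to build an admissible $H$ adapted to the sign pattern $\mathrm{sign}(a_i-b_j)$; one must be careful here, since the naive candidates --- the Hadamard product of the overlap matrix with the sign matrix, or a Birkhoff--von~Neumann expansion of $(p_{ij})$ combined with Mirsky's inequality $\sum_i|a_i^\downarrow-b_i^\downarrow|\leq\|X-Y\|_1$ --- are all too lossy, and the argument must exploit that $(p_{ij})$ is \emph{unistochastic}, $p_{ij}=|O_{ij}|^2$ for some unitary $O$, not merely bistochastic. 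An alternative that sidesteps these combinatorics altogether is to follow Audenaert's original argument: substitute the integral representation $t^s=\tfrac{\sin\pi s}{\pi}\int_0^\infty\lambda^{s-1}\,t\,(t+\lambda)^{-1}\,\mathrm{d}\lambda$ (valid for $s\in(0,1)$) into both $X^s$ and $Y^{1-s}$ and reduce the estimate to a pointwise bound on the resulting resolvent integrand. Either way, once the displayed inequality is secured the lemma follows immediately by the chain above, the restriction $s\in(0,1)$ entering only to license the integral representation (equivalently, the strict form of the scalar mean inequality).
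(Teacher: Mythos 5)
The paper does not prove this lemma at all --- it is quoted verbatim from Audenaert et al.~\cite[Thm.~1]{ACMBMAV07} --- so the only meaningful comparison is with the original proof, which is a genuinely matrix-analytic argument (in its streamlined modern form it rests on operator monotonicity of $t\mapsto t^s$), not a reduction to commuting variables. Your setup is fine: since $\{X-Y\}_+$ here is the projector onto the positive eigenspace, the right-hand side equals $\tfrac12(\tr[X]+\tr[Y]-\|X-Y\|_1)$, and the identity $\tr[X^sY^{1-s}]=\sum_{ij}p_{ij}a_i^sb_j^{1-s}$ with $p_{ij}=|\langle u_i|v_j\rangle|^2$ together with $a^sb^{1-s}\geq\min(a,b)$ is correct (this is exactly the Nussbaum--Szko\l{}a reduction). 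But the entire proof then hinges on the inequality $\sum_{ij}p_{ij}|a_i-b_j|\leq\|X-Y\|_1$, which you explicitly leave unproven --- and which is in fact \emph{false}. Take
\begin{align}
X=\begin{pmatrix}2&0\\0&1\end{pmatrix},\qquad Y=U\begin{pmatrix}3/2&0\\0&1/2\end{pmatrix}U^{\dagger},\qquad U=\begin{pmatrix}\cos\tfrac{\pi}{6}&-\sin\tfrac{\pi}{6}\\ \sin\tfrac{\pi}{6}&\cos\tfrac{\pi}{6}\end{pmatrix},
\end{align}
so that $p_{11}=p_{22}=3/4$ and $p_{12}=p_{21}=1/4$. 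Then $\sum_{ij}p_{ij}|a_i-b_j|=\tfrac34\cdot\tfrac12+\tfrac14\cdot\tfrac32+\tfrac14\cdot\tfrac12+\tfrac34\cdot\tfrac12=\tfrac54$, whereas $X-Y=\bigl(\begin{smallmatrix}3/4&-\sqrt3/4\\-\sqrt3/4&1/4\end{smallmatrix}\bigr)$ has eigenvalues $0$ and $1$, so $\|X-Y\|_1=1<\tfrac54$. Equivalently, your intermediate bound gives $\tr[X^sY^{1-s}]\geq\tfrac12(5-\tfrac54)=\tfrac{15}{8}$, which is strictly weaker than the claimed $\tfrac12(5-1)=2$; Audenaert's inequality itself survives (e.g.\ $\tr[X^{1/2}Y^{1/2}]\approx2.39\geq2$), but your chain cannot reach it.

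The conceptual point is that the total variation distance of the Nussbaum--Szko\l{}a distributions can strictly exceed the trace distance of the operators, so the classical Bayes bound obtained after this decomposition is genuinely weaker than the quantum statement; disposing of the $s$-dependence via $a^sb^{1-s}\geq\min(a,b)$ throws away exactly the non-commutative content of the lemma. No choice of dual witness $H$ or refinement of the unistochasticity of $(p_{ij})$ can repair this, since the target inequality itself fails. The fallback you mention --- ``follow Audenaert's original argument'' via the integral representation of $t^s$ --- is only a pointer, and that argument does not factor through your scalar reduction; it works directly with the projector $\{X-Y\}_+$ and operator inequalities. As it stands the proposal does not establish the lemma.
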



\bibliographystyle{arxiv_no_month}
\bibliography{Bib}

\end{document}